\newtheorem{corollary}{Corollary}
\newtheorem{observation}{Observation}
\newtheorem{lemma}{Lemma}
\newtheorem{theorem}{Theorem}
\theoremstyle{definition}
\newcommand{\mcell}[1]{
\begin{tabular}{c}
#1
\end{tabular}
}
\newcommand{\calp}{\mathcal{P}}
\newcommand{\calm}{\mathcal{M}}
\newcommand{\calw}{\mathcal{W}}
\newcommand{\calg}{\mathcal{G}}
\newcommand{\calr}{\mathcal{R}}
\newcommand{\cald}{\mathcal{D}}
\newcommand{\pran}[1]{\calr_{#1}}
\newcommand{\pdet}[1]{\cald_{#1}}
\newcommand{\goforward}{\mathtt{GoForward}}
\newcommand{\circulate}{\mathtt{Circulate}}
\newcommand{\expand}{\mathtt{Expand}}
\newcommand{\initialize}{\mathtt{Initialize}}
\newcommand{\move}{\mathtt{MoveInTree}}
\newcommand{\lastport}{\mathtt{last}}
\newcommand{\clr}{\mathtt{color}}
\newcommand{\parent}{\mathtt{parent}}
\newcommand{\level}{\mathtt{level}}
\newcommand{\phase}{\mathtt{phase}}
\newcommand{\last}{\mathtt{last}}
\newcommand{\stage}{\mathtt{stage}}
\newcommand{\error}{\mathtt{error}}
\newcommand{\port}{\mathtt{port}}
\newcommand{\found}{\mathtt{expanded}}
\newcommand{\firstchild}{\mathtt{firstChild}}
\newcommand{\nextchild}{\mathtt{nextChild}}
\newcommand{\mode}{\mathtt{mode}}
\newcommand{\ind}{\mathtt{index}}
\newcommand{\uts}{\mathtt{UTS}}
\newcommand{\suxs}{\mathtt{SUXS}}
\newcommand{\tr}{\mathit{True}}
\newcommand{\fl}{\mathit{False}}
\newcommand{\self}{\mathit{self}}
\newcommand{\vst}{r}
\newcommand{\pin}{p_\mathrm{in}}
\newcommand{\pout}{p_\mathrm{out}}
\newcommand{\vcur}{v_\mathrm{cur}}
\newcommand{\vtarget}{V_\mathrm{target}}
\newcommand{\nextr}{\mathit{next}_{\calr}}
\newcommand{\nextd}{\mathit{next}_{\cald}}
\newcommand{\colorupdate}{\mathit{colorUpdate}}
\newcommand{\dist}{\mathit{dist}}
\newcommand{\ie}{\textit{i.e., }}
\newcommand{\ex}{\mathbf{E}}
\newcommand{\poly}{\mathit{poly}}
\newcommand{\dmin}{\delta_{\mathrm{min}}}
\newcommand{\dmax}{\delta_{\mathrm{max}}}
\newcommand{\rctime}{O\left( m \cdot  \min\left(D, \frac{n}{c}+1, \frac{D}{c} + \log n \right ) \right)}
\newcommand{\rctimeinline}{O( m \cdot  \min(D, \allowbreak\frac{n}{c}+1, \allowbreak \frac{D}{c} + \log n) )}
\newcommand{\aspace}{\vspace{10pt}}
\newcommand{\varspace}{\hspace{10pt}}
\title{Self-stabilizing Graph Exploration by a Single Agent} 
\date{}
\author[1]{Yuichi Sudo\thanks{Corresponding Author: sudo@hosei.ac.jp}}
\author[2]{Fukuhito Ooshita}
\author[3]{Sayaka Kamei}
\affil[1]{Hosei University, Tokyo, Japan}
\affil[2]{Fukui University of Technology, Fukui, Japan}
\affil[3]{Hiroshima University, Hiroshima, Japan}
\begin{document}

\maketitle

\begin{abstract}
In this paper, we present two self-stabilizing algorithms that enable a single (mobile) agent to explore graphs. Starting from any initial configuration, \ie regardless of the initial states of the agent and all nodes, as well as the initial location of the agent, the algorithms ensure the agent visits all nodes. We evaluate the algorithms based on two metrics: the \emph{cover time}, defined as the number of moves required to visit all nodes, and \emph{memory usage}, defined as the storage needed for maintaining the states of the agent and each node.
The first algorithm is randomized. Given an integer $c = \Omega(n)$, its cover time is optimal, \ie $O(m)$ in expectation, and its memory requirements are $O(\log c)$ bits for the agent and $O(\log (c+\delta_v))$ bits for each node $v$, where $n$ and $m$ are the numbers of nodes and edges, respectively, and $\delta_v$ is the degree of node $v$.
For general $c \ge 2$, its cover time is $\rctime$,
where $D$ is the diameter of a graph. The second algorithm is deterministic. It requires an input integer $k \ge \max(D, \dmax)$, where $\dmax$ is the maximum degree of the graph. The cover time of this algorithm is $O(m + nD)$, and it uses $O(\log k)$ bits of memory for both the agent and each node.
\end{abstract}

\section{Introduction}
\label{sec:introduction}
We focus on the \emph{exploration problem} involving a single mobile entity, referred to as an \emph{agent}, within an undirected, simple, and connected graph $G=(V,E)$. The graph is anonymous, \ie nodes lack unique identifiers. The edges incident to each node $v \in V$ are labeled with locally unique \emph{port numbers} $0, 1, \dots, \delta_v-1$, where $\delta_v$ is the degree of node $v$. The agent, functioning as a finite state machine, migrates from node to node via edges.
Upon migrating from node $u$ to node $v$, the agent learns the degree $\delta_v$ and the \emph{incoming port}, \ie the port number assigned to edge $\{u,v\}$ at node $v$. Moreover, the agent can access and modify $v$'s local memory, known as a \emph{whiteboard}. The agent determines its next destination by computing the port number based on the aforementioned information.
Our objective is to enable the agent to visit every node in the graph in as few steps as possible while minimizing the memory usage of both the agent and the whiteboards. This exploration problem, fundamental in the study of
mobile computing entities, has been extensively studied~\cite{PDD+96,PP99,koucky02,YWI+03,IKY09,MT10,SBN15,GSOM20,GSO+21,IW21}.
Exploration algorithms often serve as foundational tools for solving other fundamental problems, such as rendezvous~\cite{Pelc12,TZ14,PP24}, gathering~\cite{DPP14,DFP+20,SKY+20,SMK+23,BDP23}, dispersion~\cite{AM18,KS25,SSKM20,SSN+24}, and gossiping~\cite{MT10,BDP23}. 

In this paper, we tackle the exploration problem under a more challenging setting: \emph{self-stabilizing exploration}. We do not presuppose any specific initial global state (or \emph{configuration}) of the network. This means that at the start of the exploration, (i) the agent's location within $G$ is arbitrary, (ii) the agent's state is arbitrary, and (iii) the content of each whiteboard is arbitrary. The agent is required to visit all nodes in $G$ from any potentially inconsistent configuration. Generally, an algorithm is considered self-stabilizing \cite{dijkstra82} for problem $P$ if it can solve $P$ starting from any configuration.
Self-stabilizing algorithms are capable of handling any type of transient faults, such as temporary memory corruption, making their design both practically and theoretically significant.
We evaluate algorithms using two metrics: cover time, which is the number of moves required to visit all nodes, and memory usage, which includes the storage needed for the state of the agent and the state of each node.

Generally speaking, several studies tackle a variety of problems involving mobile agents in the self-stabilizing setting \cite{BPT07,MT10,ODM17}. 
In this setting, the number of agents in the graph is fixed. In our case (\ie self-stabilizing exploration by a single agent), the number of agents is always exactly one: we do not consider configurations where no agent exists, or where two or more agents are present. Therefore, this setting may be particularly suitable for applications where physical robots operate in a field represented by an undirected graph, and the robots can leave information in some way at each intersection in the field.


Throughout this paper, 
we denote the number of nodes, the number of edges, and the diameter of a graph 
by $n$, $m$, and $D$, respectively. We denote the degree of a node $v$ by $\delta_v$,
and define $\dmin = \min_{v \in V} \delta_v$ and $\dmax = \max_{v \in V} \delta_v$.

\subsection{Related Work}
If randomization is allowed, we can easily solve self-stabilizing exploration using the well-known strategy called the \emph{simple random walk}.
When the agent visits a node $v \in V$, it simply chooses a node as the next destination uniformly at random
among $N(v)$,
where $N(v)$ is the set of all neighbors of $v$ in $G$.
In other words, it moves to any node $u \in N(v)$ with probability $P_{v,u} = 1/\delta_v$.
It is well known that the agent running this simple algorithm visits all nodes in $G$
within $O(\min(mn,mD\log n))$ steps in expectation
where $n=|V|$, $m=|E|$, and $D$ is the diameter of $G$.
 (See \cite{AKJ+79,Mat88}.) 
Since the agent is oblivious (\ie the agent does not bring any information at a migration between two nodes) and does not use whiteboards, the simple random walk is obviously a self-stabilizing exploration algorithm. 

Ikeda, Kubo, and Yamashita \cite{IKY09} improved the cover time
(\ie the number of steps to visit all nodes)
of the simple random walk
by setting the transition probability as
$P'_{v,u} = \delta_u^{-1/2}/\sum_{w \in N(v)}\delta_w^{-1/2}$
for any $u \in N(v)$.
They proved that the cover time of this \emph{biased random walk} is $O(n^2 \log n)$ steps in expectation.
However, we cannot use this result directly in our setting
because the agent must know the degrees of all neighbors of the current node to compute the next destination. 
We can implement this random walk, for example, as follows:
every time the agent visits node $v$,
it first obtains $(\delta_u)_{u \in N(v)}$
by visiting all $v$'s neighbors in $2\delta_v$ steps,
and then decides the next destination 
according to probability $(P'_{v,u})_{u \in N(v)}$,
which is now computable with $(\delta_u)_{u \in N(v)}$.
However, this implementation increases the cover time
by a factor of at least $\dmin$ and at most $\dmax$.
Whereas $n^2 \dmax \log n > mn$ always holds,
$n^2 \dmin \log n < \min(mn,mD\log n)$ may also hold.
Thus, we cannot determine which random walk has smaller cover time
without detailed analysis.
To bound the space complexity, we must know an upper bound
$\Delta$ on $\dmax$ to implement this random walk. 
If the agent stores $(\delta_u)_{u \in N(v)}$ on $v$'s whiteboard,
it uses $O(\log \Delta)$ bits in the agent-memory
and $O(\delta_v \log \Delta)$ bits in the whiteboard of each node $v$.
If the agent stores $(\delta_u)_{u \in N(v)}$
only on the agent-memory,
it uses $O(\Delta \log \Delta)$ bits in the agent-memory.

The algorithm given by Priezzhev, Dhar, Dhar, and Krishnamurthy \cite{PDD+96}, which is nowadays
well known as the \emph{rotor-router},
solves the self-stabilizing exploration
deterministically.
The agent is oblivious, but it uses only $O(\log \delta_v)$
bits in the whiteboard of each node $v \in V$.
The edges $(\{v,u\})_{u \in N(v)}$ are assumed
to be locally labeled by $0,1,\dots,\delta_v-1$ in a node $v$.
The whiteboard of each node $v$ has one variable
$v.\last \in \{0,1,\dots,\delta_v-1\}$.
Every time the agent visits a node $v$,
it increases $v.\lastport$ by one modulo $\delta_v$
and moves to the next node via the edge
labeled by the updated value of $v.\lastport$.
This simple algorithm guarantees that
starting from any configuration, 
the agent visits all nodes within $O(mD)$ steps \cite{YWI+03}.
Masuzawa and Tixeuil \cite{MT10} also gave a deterministic
self-stabilizing exploration algorithm.
This algorithm itself is designed to solve the gossiping problem
where two or more agents have to share their given information with each other.
However, this algorithm has a mechanism to visit all the nodes
starting from any configuration, which can be seen
as a self-stabilizing exploration algorithm.
The cover time and the space complexity for the whiteboards
of this algorithm are asymptotically the same
as those of the rotor-router,
while it uses a constant space of the agent-memory,
unlike oblivious algorithms such as the rotor-router.
Shimoyama, Sudo, Kakugawa, and Masuzawa~\cite{SSKM22} presented a deterministic self-stabilizing exploration algorithm, restricted to cactus graphs.

A \emph{Universal Traversal Sequence} (UTS), introduced by Aleliunas, Karp, Lipton, Lovász, and Rackoff~\cite{AKJ+79}, is also closely related to deterministic self-stabilizing exploration. Let $\calg_{N,d}$ denote the set of (not necessarily simple) connected $d$-regular graphs with at most $N$ nodes. A sequence $p_0, p_1, \dots, p_{\ell-1}$ of integers from $\{0,1,\dots,d-1\}$ is called a UTS for $\calg_{N,d}$ if, for every graph $G \in \calg_{N,d}$, the agent visits all nodes in $G$ by moving through port $p_i$ at each time step $i=0,1,\dots,\ell-1$. Aleliunas et al.~\cite{AKJ+79} proved the existence of a UTS of length $O(N^3 d^2 \log N)$ for $\calg_{N,d}$ using the probabilistic method.
From this result, given a positive integer $N$, we immediately derive a self-stabilizing exploration algorithm for any simple and connected graph of size at most $N$ with a cover time of $O(N^5 \log N)$, as follows: Let $P = p_0, p_1, \dots, p_{\ell-1}$ be a UTS of length $O(N^5 \log N)$ for $\calg_{N,N-1}$. The agent maintains a single variable $\ind \in \{0,1,\dots,\ell-1\}$, and at each time step, moves through port $p_{\ind}$ and increments $\ind$ by one modulo $\ell$. The agent can compute the UTS locally at each time step, thus requiring only $O(\log N)$ bits of agent memory to maintain $\ind$.
One might be concerned that the above UTS is designed specifically for regular graphs, thus questioning its applicability to arbitrary graphs. However, this restriction is not significant, as any arbitrary graph can be virtually transformed into a regular graph by adding self-loops (see \cite{TZ14} for details).

Kouck{\`y}~\cite{koucky02} introduced a similar concept called a \emph{Universal Exploration Sequence} (UXS). Although we omit the definition of a UXS here, given a UXS, we can immediately derive a self-stabilizing exploration algorithm in a similar way by using the incoming port information. Reingold~\cite{Reingold08} proved that, given positive integers $N$ and $d$, a UXS of length $\mathit{poly}(N)$ for $\calg_{N,d}$ can be explicitly constructed in log-space and, hence, in polynomial time. (Note that the proof by Aleliunas et al.~is non-constructive, and thus local computation of a UTS may require super-exponential time.) Later, Ta-Shma and Zwick~\cite{TZ14} introduced the concept of a \emph{Strongly Universal Exploration Sequence} (SUXS) and obtain similar results, leading to a cover time of $O(n^{10} \log^2 n)$. Compared to a UTS, the exponent is doubled, but this cover time no longer depends on a given upper bound $N$; instead, it depends only on the actual size $n$. Xin~\cite{Xin07} improved this cover time to $O(n^{5+\varepsilon})$, where $\varepsilon$ is an arbitrarily small positive constant.


A few self-stabilizing algorithms were given for mobile agents
to solve problems other than exploration.
Blin, Potop-Butucaru, and Tixeuil \cite{BPT07}
studied the self-stabilizing naming and leader election problem.
Masuzawa and Tixeuil \cite{MT10} gave a self-stabilizing
gossiping algorithm. Ooshita, Datta, and Masuzawa \cite{ODM17}
gave self-stabilizing rendezvous algorithms.

If we assume a specific initial configuration,
that is, if we do not require a self-stabilizing solution,
the agent can easily visit all nodes within $2m$ steps
with a simple depth first traversal (DFT).
Panaite and Pelc \cite{PP99} gave a faster algorithm,
whose cover time is $m + 3n$ steps. 
They assume that the nodes are labeled by the unique identifiers.
Their algorithm uses $O(m \log n)$ bits in the agent-memory,
while it does not use whiteboards.
Sudo, Baba, Nakamura, Ooshita, Kakugawa, and Masuzawa \cite{SBN15}
gave another implementation of this algorithm:
they removed the assumption of the unique identifiers
and reduced the space complexity on the agent-memory
from $O(m\log n)$ bits to $O(n)$ bits by using $O(n)$ bits in
each whiteboard.
It is worthwhile to mention that
these algorithms \cite{PP99,SBN15}
guarantee the termination of the agent
after exploration is completed,
whereas designing a self-stabilizing exploration algorithm
with termination is impossible.
Self-stabilization and termination
contradict each other by definition:
if an agent-state that yields termination exists,
the agent never completes exploration
when starting exploration with this state. 
If such state does not exist,
the agent never terminates the exploration.

In the classical or standard distributed computing model (excluding mobile agents), the self-stabilizing token circulation problem, particularly self-stabilizing depth-first token circulation (DFTC), has been extensively studied \cite{HC93,DJPV00,Petit01}. Introduced by Huang and Chen \cite{HC93}, this problem was addressed with a self-stabilizing DFTC algorithm using $O(\log n)$ bits per process, which was later reduced to $O(\log \dmax)$ bits by Datta, Johnen, Petit, and Villain \cite{DJPV00}. Petit \cite{Petit01} developed a time-optimal (\ie $O(n)$-time) self-stabilizing DFTC algorithm that also requires $O(\log n)$ bits per process.
However, these algorithms are not directly applicable to self-stabilizing exploration by a single agent because the network models are fundamentally different. In the standard model, $n$ computational processes can communicate with each other via communication links in parallel, whereas in our model, only a single agent computes and updates the states of nodes in the network, potentially requiring more time to solve problems.
On the other hand, one of the main challenges for self-stabilizing token circulation in the standard model is maintaining exactly one token starting from any configuration where there may be no tokens or where two or more tokens may exist. As mentioned above, this challenge does not apply to our model, where there is always a single agent in any configuration.
Yet, many techniques from standard distributed computing might be adaptable for mobile agent algorithms. For example, our self-stabilizing exploration algorithms employ the technique of repeatedly recoloring graph nodes to resolve variable inconsistencies, a common approach in the design of self-stabilizing algorithms (see Dolev, Israeli, and Moran \cite{DIM97}).

\begin{table*}[t]
\caption{Randomized self-stabilizing graph exploration algorithms for a single agent.
}
\label{tbl:rand}
\center
{
\begin{tabular}{c c c c}
\hline
 & Expected Cover Time & \mcell{Agent \\Memory} & \mcell{Memory on \\ node $v$}\\
\hline
Simple Random Walk & $O(\min(mn,mD\log n))$ & 0 & 0\\
\mcell{Biased Random Walk \cite{IKY09}\\ (require $\Delta \ge \dmax$ )}&
$O(n^2 \dmax \log n)$ & \mcell{$O(\log \Delta)$\\$O(\Delta \log \Delta)$} & \mcell{$O(\delta_v \log \Delta)$\\$0$}\\
$\pran{c}$ (require $c\ge 2$) &
$\rctime{}$
& $O(\log c)$ &
$O(\log (\delta_v+c))$
\\
\hline
\end{tabular}
}
\vspace{10pt}
\caption{Deterministic self-stabilizing graph exploration algorithms for a single agent,
where $\varepsilon > 0$ (fourth row) is an arbitrary small constant.
}
\label{tbl:det}
\center
{
\begin{tabular}{c c c c}
\hline
 & Cover Time & \mcell{Agent \\ Memory} &\mcell{Memory on \\ node $v$}\\
\hline 
Rotor-router \cite{PDD+96} & $O(mD)$ & 0 & $O(\log \delta_v)$ \\
 $\uts_{N}$~\cite{AKJ+79} (require $N \ge n$)
 & $O(N^5 \log N)$& $O(\log N)$ & 0 \\
$\suxs_N$~\cite{TZ14} (require $N \ge n$) & $O(n^{10}\log^2 n)$ & $O(\log N)$ & 0 \\
$\suxs_N$~\cite{Xin07} (require $N \ge n$) & $O(n^{5+\varepsilon})$ & $O(\log N)$ & 0 \\
2-color DFT \cite{MT10}
& $O(mD)$ & $O(1)$ & $O(\log \delta_v)$ \\
$\pdet{k}$ (require $k \ge \max(D,\dmax)$)  & $O(m+nD)$ & $O(\log k)$ & $O(\log k)$ \\
\hline
\end{tabular} 
}
\end{table*}

\subsection{Our Contribution}

In this paper, we investigate the minimum achievable cover time in a self-stabilizing setting. It is easy to see that the cover time is lower bounded by $\Omega(m)$: any (even randomized) algorithm requires at least $\Omega(m)$ steps in expectation before the agent visits all nodes. For completeness, we prove this lower bound in the appendix (Section \ref{sec:lower}). Our goal is to design an algorithm whose cover time is close to this lower bound while minimizing the required agent memory and whiteboard space.

We give two self-stabilizing exploration algorithms
$\pran{c}$ and $\pdet{k}$,
where $c$ and $k$ are the design parameters.
The cover times and the space complexities
of the proposed algorithms and the existing algorithms
are summarized in Tables \ref{tbl:rand} and \ref{tbl:det}.

Algorithm $\pran{c}$ is a randomized algorithm that achieves a cover time of $\rctimeinline$ steps in expectation, utilizes $O(\log c)$ bits of agent memory, and requires $O(\log (\delta_v + c))$ bits for the whiteboard of each node $v$.
Thus, we have trade-off between the cover time and
the space complexity.
The larger $c$ we use, the smaller cover time we obtain.
In particular, 
the expected cover time
is $O(m\log n)$ steps if we set $c=\Omega(D/\log n)$,
and it becomes optimal (\ie $O(m)$ steps) if we set
$c = \Omega(n)$.
This means that we require the knowledge of
$\Omega(n)$ value to make $\pran{c}$ time-optimal.
Fortunately, this assumption can be ignored
from a practical point of view:
even if $c$ is extremely larger than $n$, 
the overhead will be just an additive factor of $\log c$
in the space complexity.
Thus, any large $c \in \poly(n) \cap \Omega(n)$ is enough to obtain
the optimal cover time and the space complexity of
$O(\log n)$ bits both in the agent memory and whiteboards.
Moreover, irrespective of parameter $c \ge 2$,
the cover time is $O(mD)$ steps with probability $1$.

Algorithm $\pdet{k}$ is deterministic, and its cover time is $O(m+nD)$ steps, independent of the parameter $k$. The agent uses $O(\log k)$ bits both for agent memory and the whiteboard at each node. Thus, we do not have a trade-off between cover time and space complexity. Unlike $\pran{c}$, however, algorithm $\pdet{k}$ requires prior knowledge of an upper bound on the diameter and maximum degree, specifically, it assumes $k \ge \max(D,\dmax)$. If this assumption is not met, the correctness of the algorithm is no longer guaranteed.
Nevertheless, this assumption can still be considered weak: since the space complexity grows only logarithmically in $k$, it is always feasible to choose a sufficiently large polynomial upper bound on $n$ for $k$ without significantly affecting memory usage. Thus, even though the algorithm formally requires knowledge of an upper bound, this requirement does not substantially limit the applicability of this algorithm.

It remains open if there is a deterministic self-stabilizing exploration algorithm with optimal cover time, \ie $O(m)$ steps.

\section{Preliminaries}
\label{sec:model}
Let $G=(V,E,p)$ be a simple, undirected, and  connected graph
where $V$ is the set of nodes and
$E$ is the set of edges.
The edges are locally labeled at each node:
we have a family of functions
$p=(p_v)_{v\in V}$
such that
each $p_v: \{\{v,u\} \mid u \in N(v)\} \to \{0,1,\dots,\delta_v-1\}$ uniquely assigns
a \emph{port number} to every edge incident to node $v$.
Two port numbers $p_u(e)$ and $p_v(e)$ are
independent of each other for edge $e =\{u,v\}\in E$.
The node $u$ neighboring to node $v$ such that
$p_v(\{v,u\}) = q$ is called the \emph{$q$-th neighbor} of $v$
and is denoted by $N(v,q)$. 
For any two nodes $u,v$, we define the distance between $u$ and $v$ as the length of a shortest path between them and denote it by $d(u,v)$. We also define the set of the \emph{$i$-hop neighbors} of $v$ as $N_i(v) = \{u \in V \mid d(u,v) \le i\}$.
Note that $N_0(v)=\{v\}$ and $N_1(v)=N(v) \cup \{v\}$.


An \emph{algorithm} 
is defined as a 3-tuple $\calp=(\phi,\calm,\calw)$,
where $\calm$ is the set of states for the agent,
$\calw=(\calw_k)_{k \in \mathbb{N}}$
is the family such that $\calw_k$ is
the set of states for each node with degree $k$,
and $\phi$ is a function that determines
how the agent updates its state (\ie agent memory) and
the state of the current node (\ie \emph{whiteboard}). 
At each time step,
the agent is located at exactly one node
$v \in V$,
and moves through an edge incident to $v$.
Every node $v \in V$ has a whiteboard $w(v) \in \calw_{\delta_v}$,
which the agent can access freely when it visits $v$.
The function $\phi$ is invoked
every time the agent visits a node or when the exploration begins.
Suppose that the agent with state $s \in \calm$
has moved to node $v$ with state $w(v) = x \in \calw_{\delta_v}$
from $u \in N(v)$. Let $\pin=p_v(\{u,v\})$.
The function $\phi$ takes 4-tuple $(\delta_v,\pin,s,x)$ as the input
and returns 3-tuple $(\pout,s',x')$ as the output.
Then, the agent updates its state to $s'$
and $w(v)$ to $x'$, after which it moves via port $\pout$,
that is, it moves to $v'$ such that $\pout = p_v(\{v,v'\})$.
At the beginning of an execution, we let $\pin$ be
an arbitrary integer in $\{0,1,\dots,\delta_v-1\}$ where
$v$ is the node at which the agent is located. 
Note that if algorithm $\calp$ is randomized one,
function $\phi$ returns the probabilistic distributions
for $(\pout,s',x')$.



Given a graph $G=(V,E)$, a \emph{configuration} (or a global state)
consists of the location of the agent, the state of the agent
(including $\pin$), and the states of all the nodes in $V$.
Algorithm $\calp$ is a self-stabilizing exploration algorithm
for a class $\calg$ of graphs
if for any graph $G=(V,E,p) \in \calg$, the agent running $\calp$ on $G$ eventually visits all the nodes in $V$ at least once
starting from any configuration.
Note that, by the above definition, any self-stabilizing exploration algorithm ensures that the single agent visits every node infinitely often.

We measure the efficiency of algorithm $\calp$
by three metrics:
\emph{the cover time}, \emph{the agent memory space},
and \emph{the whiteboard memory space}.
All the above metrics are evaluated in the worst-case manner
with respect to graph $G$ and an initial configuration.
The cover time is defined as the number of moves
that the agent makes before it visits all nodes.
If algorithm $\calp$ is a randomized one,
the cover time is evaluated in expectation. 
The memory spaces of the agent and the whiteboard on node $v$
are just defined
as $\log_2 |\calm|$ and $\log_2|\calw_{\delta_v}|$,
respectively.

Throughout this paper, we refer to the node where the agent is currently located as \emph{the current node}, denoted by $\vcur$. We call the port number of the current node via which
the agent migrates to $\vcur$ \emph{in-port}
and denote it by $\pin$.
As mentioned above, the function $\phi$ can use $\pin$
to compute 3-tuple $(\pout, s', x')$.
In other words, the agent can always access $\pin$
to update its variables and
the whiteboard variables of the current node
and compute the destination of the next migration.

\paragraph*{\textbf{Algorithm Description}}

This paper presents two algorithms.
For simplicity,
instead of giving a formal 3-tuple $(\phi,\calm,\calw)$,
we specify the set of agent variables,
the set of whiteboard variables,
and the pseudocode of instructions that the agent performs.
In addition to the agent variables, 
the agent must convey the program counter of the pseudocode
and the call stack (or the function stack) when its migrates
between nodes to execute an algorithm consistently.
Thus, all the agent variables, the program counter,
and the call stack
constitute the state of the agent.
Conforming to the above formal model,
the state of the agent changes only when it migrates
between nodes. 
Therefore, the program counter of each state
is restricted to one of the line numbers
corresponding to the instructions of migration.
For example, in Algorithm \ref{al:pran} explained
in Section \ref{sec:random}, the domain of the program counter
in the agent states are $\{13,17,19\}$.
We can ignore the space for the program counter 
and the call stack to evaluate the space complexity
because they require only $O(1)$ bits.
(Our pseudocodes do not use a recursive function.)
Whiteboard variables are stored in the whiteboard $w(v)$
of each node $v$.
All the whiteboard variables in $w(v)$
constitute the state of $w(v)$.
 For clarification,
 we denote an agent variable $x$ by $\self.x$
 and an whiteboard variable $y$ in $w(v)$ by $v.y$.

\section{Randomized Exploration}
\label{sec:random}
In this section, we give a randomized self-stabilizing exploration algorithm $\pran{c}$. The cover time and space complexity of $\pran{c}$ depend on the design parameter $c \geq 2$. Specifically, the cover time is $O(m)$ if $c = \Omega(n)$, while it is $O(m \log n)$ if $c = \Omega(D/\log n)$. The cover time is $O(mD)$ with probability 1, regardless of how small $c$ may be. Algorithm $\pran{c}$ requires $O(\log c)$ bits of agent memory and $O(\log (\delta_v+c))$ bits for each node $v \in V$. If $c = 2$, the algorithm becomes deterministic and is nearly identical to the exploration algorithm given by \cite{MT10}.

The idea of algorithm $\pran{c}$ is simple.
The agent tries to traverse all the edges in every $2m$ moves according to the Depth-First-Traversal (DFT).
However, we have an issue for executing DFT.
When the agent visits a node for the first time,
it has to mark the node as an \emph{already-visited} node.
However, 
we must deal with an arbitrary initial configuration
in a self-stabilizing setting,
thus some of the nodes may be marked even at the beginning of an execution. 
To circumvent this issue, we use \emph{colors} and let the agent make DFTs repeatedly.
The agent and all nodes maintain
their color in a variable $\clr \in \{1,2,\dots,c\}$.
Every time the agent begins a new DFT, 
it chooses a new color $\rho$ uniformly at random from all the colors except for the current color of the agent.
In the new DFT,
the agent changes the colors of all the visited nodes to $\rho$.
Thus, the agent can distinguish the visited nodes and 
the unvisited nodes with their colors:
it interprets that the nodes colored $\rho$ are already visited and
the nodes having other colors have not been visited before
in the current DFT.
Of course, the agent may make incorrect decision
because one or more nodes may be colored $\rho$
when the agent chooses $\rho$ for its color at the beginning of a new DFT.
Thus, the agent may not perform a complete DFT. 
However, the agent can still make a progress for exploration
to a certain extent in this DFT:
Let $\vst$ be the node at which the agent is located 
when it begins the DFT,
let $S$ be the set of the nodes colored $\rho$ at that time,
and let $R$ be the set of all nodes in the connected component including $\vst$ of the induced subgraph $G[V \setminus S]$;
then it visits all nodes in $R$
and their neighbors.
Hence, as we will see later, the agent visits
all nodes in $V$ 
by repeating DFTs a small number of times,
which depends on how large $c$ is.

\begin{figure}[t]
  \centering
  \includegraphics[width=0.55\linewidth]{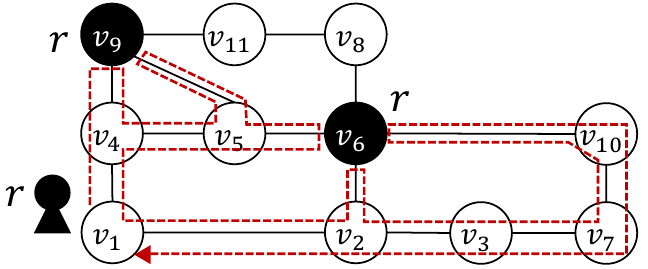}
  \caption{An example of the trajectory of one DFT}
 \label{fig:pran}
\end{figure}

Figure \ref{fig:pran} shows an example of one DFT where
$\vst = v_1$ and $S=\{v_6,v_9\}$. The agent tries to make a DFT
on the graph starting from node $v_1$.
Whenever the agent visits node $v_6$ or $v_9$, which are colored $\rho$,
it immediately \emph{backtracks} to the previous node.
This is because it mistakenly \emph{thinks} that it has already visited these nodes even when it visits them for the first time
in the DFT.
Thus, it never visits $v_8$ or $v_{11}$ in this DFT, which is unreachable
from node $v_1$ in the induced subgraph $G[V \setminus S]$.
However, it visits all the other nodes.
Note that the index of the nodes
are depicted in the figure only for simplicity of explanation,
and we do not assume the existence of any unique identifiers of the nodes.

\begin{algorithm}
\caption{$\pran{c}$}
\label{al:pran}
\Notation{$\nextr(v) = (\pin + 1) \bmod \delta_{v}$}
\aspace
\While{$\tr$}{
Choose $\self.\clr$ uniformly at random from $\{1,2,\dots,c\} \setminus \{\self.\clr\}$\; 
$\vcur.\clr \gets \self.\clr$\;
$\goforward(0)$\;
\While{$\lnot (\vcur.\parent = \bot \land \pin = \delta_{\vcur}-1)$}{
\uIf{$\vcur.\clr \neq \self.\clr$}{
$\vcur.\clr \gets \self.\clr$\; $\vcur.\parent \gets \pin$\;
$\goforward(\nextr(\vcur))$\;
}
\Else{
\uIf{$\vcur.\parent = \nextr(\vcur)$}{
$\vcur.\parent \gets \bot$\;
Migrate to $N(\vcur,\nextr(\vcur))$\;
\tcp*{type-II backtracking}
}
\Else{
$\goforward(\nextr(\vcur))$\;
}
}
}
}
%
\Fn{$\goforward(q)$}{
Migrate to node $N(\vcur,q)$ \tcp*{forward move}
\If{$\vcur.\clr = \self.\clr$}{
Migrate to node $N(\vcur,\pin)$
\tcp*{type-I backtracking}
}
}
\end{algorithm}

\subsection{Detailed Algorithm Description}
The pseudocode of $\pran{c}$ is shown in Algorithm $\ref{al:pran}$. The agent maintains one variable $\self.\clr \in \{1,2,\dots,c\}$, while each node $v$ maintains two variables
$v.\parent \in \{\bot,0,1,\dots,\allowbreak \delta_v-1\}$
and $v.\clr \in \{1,2,\dots,c\}$.
As mentioned before,
the agent uses variables $\self.\clr$ and $v.\clr$
to distinguish already-visited nodes and unvisited nodes
in the current DFT. It also uses whiteboard variable
$v.\parent$ to go back to the node at which it began
the current DFT.
Specifically, whenever the agent finds a node having different color from $\self.\clr$, it changes $\vcur.\clr$ to $\self.\clr$ while it simultaneously substitutes $\pin$ for $\vcur.\parent$ (Lines 7,8).
Thus, the agent performs a DFT with making a spanning tree on graph $G$. We say that node $u$ is the parent of node $v$ when $u = N(v,v.\parent)$.

In an execution of this algorithm, there are three kinds of agent-migration between nodes:
\begin{description}
 \item \textbf{Forward Moves}: The agent executes this kind of migration when it tries to find an unvisited node in the current DFT,
that is, a node not colored $\self.\clr$.
The agent makes a forward move only in Line 17 in function
$\goforward()$, which is invoked in Lines 4, 9, and 15.
\item \textbf{Backtracking (Type I)}: 
The agent executes this kind of migration when the agent makes a forward move, but the destination node already has color $\self.\clr$. By this migration, the agent backtracks to the previous node at which it made a forward move (Line 19). 
\item \textbf{Backtracking (Type II)}:
The agent executes this kind of migration
when it thinks that it has already visited all nodes in $N(\vcur)$. 
Specifically, it backtracks to the parent of $\vcur$
when $\vcur.\parent = (\pin + 1) \bmod \delta_{v}$  (Line 13).
\end{description}

Every time the agent begins a new DFT,
it chooses a new color uniformly at random from $\{1,2,\dots,c\}\setminus \{\self.\clr\}$ (Line 2).
At this time, the starting node of this DFT, say $\vst$, satisfies $\vst.\parent = \bot$ because the while-loop at Lines 5-15 ends if and only if $\vcur.\parent = \bot$ holds. 
Thereafter, it performs a DFT with the new color $\self.\clr$
(Lines 3-15).
The agent first changes $\vst.\clr$ to $\self.\clr$
and makes a forward move to node $N(\vst,0)$ (Lines 3-4).
Thereafter, whenever it finds an unvisited node,
it marks the node with color $\self.\clr$ and 
keeps on making a forward move (Lines 6-9).
If the destination node of a forward move already has color $\self.\clr$, it makes type-I backtracking. 
After a (type-I or type-II) backtracking, 
the agent decides whether there are still unvisited neighbors in $N(\vcur)$. The agent can make this decision according to the predicate $\vcur.\parent = \nextr(\vcur)$ (Line 11) because
it must have made forward moves to
$N(v,(\vcur.\parent+1) \bmod \delta_v), N(v,(\vcur.\parent+2) \bmod \delta_v),\allowbreak \dots, N(v,(\vcur.\parent + \delta_v -1) \bmod \delta_v)$ 
in this order if this predicate holds.

If this predicate does not hold, the agent makes a forward move to node $N(\vcur,\nextr(\vcur))$ (Line 15). Otherwise, it performs type-II backtracking (Line 13). Notably, during this process, the agent simultaneously updates $\vcur.\parent$ to $\bot$ (Line 12). This update is necessary to prevent endless type-II backtracking that could occur without this update when starting from a configuration where $(\{v,v.\parent\})_{v \in V}$ has a cycle.
The agent eventually returns to $\vst$ from node $N(\vst,\delta_{\vst}-1)$. Then, it terminates the current DFT (\ie breaks the while-loop at Lines 5-15) and begins a new DFT.

\subsection{Correctness and Cover Time}

\begin{lemma}
 \label{lemma:back}
 If the agent running $\pran{c}$ makes a forward move from node $v \in V$
 to node $u \in N(v)$, 
 thereafter no type-II backtracking to $v$ occurs before
 the agent makes type-II backtracking from $u$ to $v$.
\end{lemma}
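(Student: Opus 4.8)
The plan is to analyze the behavior of the agent on node $v$ during the interval of execution that starts right after the forward move from $v$ to $u$. First I would recall the precise definition of a forward move: the agent invokes $\goforward(q)$ with some argument $q$, executes Line 17 (migrating to $N(v,q)$), and — crucially — the type-I backtracking check on Line 18 fails, meaning the destination node $u$ did \emph{not} have color $\self.\clr$ at the moment of arrival. Hence after this forward move the agent sets $u.\clr \gets \self.\clr$ and $u.\parent \gets \pin = p_u(\{u,v\})$ (Lines 7--8), so that $v$ becomes the recorded parent of $u$ via the port $q' := \nextr^{-1}$-type bookkeeping; more precisely, $u.\parent$ now points back to $v$. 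The key observation is that after this assignment, the only line that can produce a migration from $u$ to $v$ \emph{as a type-II backtracking} is Line 13, which fires exactly when $u.\parent = \nextr(u,\pin)$ at a later visit to $u$; and to return specifically to $v$ via that line we need $u.\parent$ still equal to the port leading to $v$.

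The main argument is then a ``stack discipline'' observation. Between the forward move $v \to u$ and the first type-II backtracking $u \to v$, the agent stays entirely within the subtree rooted at $u$ in the DFT forest being constructed — because to leave that subtree and come back up to $v$ it must first perform a type-II backtracking out of $u$. During this whole interval, the whiteboard variable $v.\parent$ is never written: the only writes to $v.\parent$ happen on Line 8 (when the agent is physically \emph{at} $v$, which it is not during this interval), on Line 12 (type-II backtracking, again performed at $v$), and implicitly nowhere else. So $v.\parent$ is frozen throughout. Now suppose, for contradiction, that a type-II backtracking \emph{to} $v$ occurs during this interval. Such a move originates at some node $w \in N(v)$ executing Line 13, which requires the agent to be at $w$ with $w.\parent = \nextr(w,\pin)$ and then migrates to $N(w, \nextr(w,\pin)) = N(w, w.\parent)$, i.e. to the recorded parent of $w$. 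For this to be $v$, we need $w$'s recorded parent to be $v$. But the agent cannot be at such a $w$ during the interval in question: the interval is spent inside the subtree of $u$, and the only node in that subtree whose recorded parent is $v$ would have to be $u$ itself — and a type-II backtracking from $u$ to $v$ is precisely the event that terminates the interval, so it cannot occur strictly before it.

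I expect the main obstacle to be making the phrase ``stays inside the subtree of $u$'' rigorous without circular reasoning, since the ``subtree'' is being built dynamically and the initial configuration may contain a spurious $\parent$-cycle. The clean way around this is to avoid talking about subtrees at all and instead track the agent's \emph{position} directly: I would argue by a minimal-counterexample / first-return argument that between leaving $v$ (forward to $u$) and the first return to $v$, \emph{every} migration the agent performs is either a forward move, a type-I backtracking, or a type-II backtracking whose source is not $v$ and whose destination is not $v$ — and then show that the first migration whose \emph{destination} is $v$ must be a type-II backtracking out of $u$, by examining which line could have generated it and using that $u.\parent$ points to $v$ while (by the frozen-variable observation) no other neighbor $w$ of $v$ currently has $w.\parent$ pointing to $v$ in a way the agent could exploit before returning. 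A short induction on the number of steps since the forward move, with the invariant ``the agent has not yet visited $v$ and $v.\parent$ is unchanged,'' should close this cleanly; the type-I backtracking case is immediate since type-I backtracking returns to the node the agent just came from, never skipping to $v$.
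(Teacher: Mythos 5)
Your core idea---a stack discipline on the chain of parent pointers built since the forward move, with $u$ as the link adjacent to $v$---is the same as the paper's, which maintains the invariant that $(\{w,w.\parent\})_{w\in V}$ always contains a path from $\vcur$ back to $v$ whose last edge is $\{u,v\}$, so that Line 13 can send the agent to $v$ only from $u$. However, the concrete formalization you propose to close the argument rests on a false invariant. You claim that between the forward move $v\to u$ and the type-II backtracking $u\to v$ ``the agent has not yet visited $v$,'' and that ``the first migration whose destination is $v$ must be a type-II backtracking out of $u$.'' Neither is true: any node $w\neq u$ explored during this interval that happens to be adjacent to $v$ will in due course make a \emph{forward move} to $v$ (ports are tried in cyclic order, and only the port toward $w$'s recorded parent is skipped), and since $v$ already carries the agent's color this is immediately answered by a type-I backtracking from $v$ back to $w$. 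So the agent does re-enter $v$, possibly many times, before the type-II backtracking from $u$; the lemma only asserts that none of these earlier arrivals is a \emph{type-II} backtracking. An induction carrying ``the agent has not yet visited $v$'' breaks at the first such forward move.

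A second soft spot is the claim that ``no other neighbor $w$ of $v$ currently has $w.\parent$ pointing to $v$ in a way the agent could exploit.'' In an arbitrary initial configuration such a $w$ may well exist, and your frozen-variable observation (which concerns $v.\parent$, not $w.\parent$) does not rule it out. What actually saves the lemma is not the values of the parent pointers but a statement about where Line 13 can fire: the agent reaches Line 13 only at a node it is currently processing, i.e., the head of the parent chain it has itself pushed since the forward move, and the migration follows the first link of that chain. Since the chain always terminates with the link $u\to v$ installed by Line 8 right after the forward move, the pop that lands on $v$ can only be issued at $u$. This is exactly the paper's path invariant; you need it (or an equivalent induction on which nodes can execute Line 13), and your sketch currently substitutes for it an assertion about parent-pointer values that an adversarial initial configuration can violate.
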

\begin{proof}
By the definition of $\pran{c}$, the set 
$(\{v,v.\parent\})_{v\in V}$ always contains
a path from $\vcur$ to $v$ during the period from the forward move to the type-II backtracking from $u$ to $v$.
Therefore, the agent never makes type-II backtracking from any $w \in N(v) \setminus \{u\}$ to $v$ during the period.
\end{proof}

\begin{lemma}
\label{lemma:DFTtime}
Starting from any configuration,
the agent running $\pran{c}$ changes its color (\ie begins a new DFT) within $8m+n=O(m)$ rounds
with probability 1.
\end{lemma}

\begin{proof}
Let $C$ be an any configuration and let $\rho$ be the color of the agent (\ie $\rho=\self.\clr$) in $C$.
Let $M_F(v)$, $M_{B1}(v)$, and $M_{B2}(v)$ be the numbers of forward moves,
type-I backtracking, and type-II backtracking
that the agent makes at $v \in V$, respectively,
until it changes $\self.\clr$ from $\rho$
in the execution starting from $C$.
Whenever the agent makes type-II backtracking at $v$,
it changes $v.\parent$ to $\bot$.
Thus, it never makes type-II backtracking twice at $v$,
that is, $M_{B2}(v) \le 1$.
Next, we bound $M_{F}(v)$. 
By Lemma \ref{lemma:back},
once the agent makes a forward move from $v$ to $N(v,i)$, 
the following migrations involving $v$ must be
type-II backtracking from $N(v,i)$ to $v$, 
the forward move from $v$ to $N(v,i+1)$,
type-II backtracking from $N(v,i+1)$ to $v$, 
the forward move from $v$ to $N(v,i+2)$, and so on.
Therefore, the agent makes type-II backtracking at $v$ 
or observes $v.\parent = \bot \land \pin = \delta_v$
(and changes its color) before it makes forward moves
$\delta_v$ times. Since the agent makes 
type-II backtracking at $v$ at most once, 
we also have $M_{F}(v) \le 2\delta_v$.
Type-I backtracking occurs only after the agent makes a forward move. Therefore, we have $\sum_{v \in V}M_{B1}(v) \le \sum_{v \in V}M_{F}(v)$.
To conclude, we have $\sum_{v \in V}(M_{F}(v)+M_{B1}(v)+M_{B2}(v)) \le \sum_{v\in V} 2M_{F}(v) + \sum_{v \in V} M_{B2}(v) = 8m + n$.
\end{proof}

\begin{lemma}
\label{lemma:expand}
Suppose now that the agent running $\pran{c}$
changes its color to $\rho$ and begins a new DFT at node $\vst$.
Let $S \subseteq V$ be the set of the nodes colored $\rho$ at that time
and let $R$ be the set of the nodes
in the connected component including $\vst$ in the induced subgraph $G[V \setminus S]$.
Then, the agent visits all nodes in $R$
and their neighbors in this DFT with probability 1.
Moreover, this DFT finishes (\ie the agent changes its color)
at node $\vst$.
\end{lemma}

\begin{proof}   
By the definition of $\pran{c}$, the set 
$(\{v,v.\parent\})_{v\in V}$ always contains
a path from $\vcur$ to $\vst$.
Therefore, this DFT terminates only at $\vst$.
Thus it suffices to show that the agent
visits all nodes in 
$R \cup \bigcup_{v \in R} N(v)$ in this DFT.
Let $u$ be any node in $R \cup \bigcup_{v \in R} N(v)$.
By definition of $R$, there exists a path
$v_0,v_1,v_2,\dots,v_l$ in $G$ where $\vst=v_0$, $u=v_l$ and $v_i \notin S$ for all $i = 1,2,\dots,l-1$.
As mentioned in the proof of Lemma \ref{lemma:DFTtime},
once the agent makes a forward move from $v$ to $N(v,i)$, 
the following migrations involving $v$ must be
type-II backtracking from $N(v,i)$ to $v$, 
the forward move from $v$ to $N(v,i+1)$,
type-II backtracking from $N(v,i+1)$ to $v$, 
the forward move from $v$ to $N(v,i+2)$, and so on.
Moreover, the agent makes the first forward move at $\vst$ to $N(\vst,0)$ at Line 4 and
makes the first forward move at each $v_i$ for $i=1,\dots,l-1$ to $N(v_i,(v_i.parent + 1) \bmod \delta_{v_i})$ at Line 9
because we have $v_i \notin R$ for $i=0,1,\dots,l-1$.
This means that for each $i=0,1,\dots,l-1$,
the agent makes a forward move exactly once
from $v_i$ to each $u \in N(v_i)$ except for $v_i$'s parent
in the current DFT. 
In particular, the agent makes a forward move from $v_{l-1}$
to $v_{l}=u$, thus the agent visits $u$ in this DFT.
\end{proof}



In the DFT mentioned in Lemma \ref{lemma:expand},
the agent changes the colors of all nodes in $R'=R \cup \bigcup_{v \in R}N(v)$ to $\rho$.
In the next DFT, the agent must choose a color different from $\rho$,
ensuring that the agent will visit all nodes in $R'$ and their neighbors (plus possibly many more nodes).
Consequently, the number of the nodes visited in the $i$-th DFT monotonically increases with respect to $i$.
Therefore, we obtain the following corollary.

\begin{corollary}
\label{cor:visitall}
The agent running $\pran{c}$ visits all nodes
before it changes its color $D+1$ times.
\end{corollary}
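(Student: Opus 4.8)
The plan is to prove Corollary~\ref{cor:visitall} by induction on the number of completed DFTs, showing that the set of nodes guaranteed to be visited grows by at least one hop in each DFT, so that after $D$ DFTs the entire graph is covered, and the claim about ``$D+1$ color changes'' then follows because the agent must change its color once more to \emph{begin} the $(D+1)$-st DFT (equivalently, within the first $D$ DFTs, i.e. before the $(D+1)$-st color change, all nodes are visited).

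Concretely, for $i \ge 1$ let $r_i$ be the color chosen at the start of the $i$-th DFT (counting DFTs after the first color change, so that the starting configuration of each DFT is well-defined), let $\vst^{(i)}$ be the node where that DFT begins, let $S_i$ be the set of nodes colored $r_i$ at that moment, and let $R_i$ be the connected component of $\vst^{(i)}$ in $G[V \setminus S_i]$. By Lemma~\ref{lemma:expand}, during DFT $i$ the agent visits every node of $R_i' := R_i \cup \bigcup_{v \in R_i} N(v)$ and recolors all of them to $r_i$, and the DFT ends back at $\vst^{(i)}$ (which also pins down $\vst^{(i+1)} = \vst^{(i)}$, since the next DFT begins wherever the previous one ended). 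The first key step is the observation that since the agent picks $r_{i+1} \ne r_i$ (it chooses uniformly from $\{1,\dots,c\}\setminus\{r_i\}$), every node in $R_i'$, being colored $r_i$ at the end of DFT $i$, is \emph{not} colored $r_{i+1}$ at the start of DFT $i+1$; hence $S_{i+1} \cap R_i' = \emptyset$, i.e. $S_{i+1} \subseteq V \setminus R_i'$.

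The second key step, which I expect to be the crux, is to deduce from $S_{i+1} \cap R_i' = \emptyset$ that $R_{i+1} \supseteq R_i'$, and in fact that $R_{i+1}$ reaches one hop further than $R_i$. Since $\vst^{(i+1)} = \vst^{(i)} \in R_i \subseteq R_i'$ and $R_i'$ is disjoint from $S_{i+1}$, the whole of $R_i'$ lies in $G[V\setminus S_{i+1}]$; I need that $R_i'$ is contained in a single connected component of that induced subgraph, namely the one containing $\vst^{(i+1)}$. This holds because $R_i$ is connected, every node of $R_i' \setminus R_i$ is adjacent to some node of $R_i$, and none of these vertices or edges touch $S_{i+1}$ — so $R_i'$ induces a connected subgraph of $G[V\setminus S_{i+1}]$ containing $\vst^{(i+1)}$, forcing $R_i' \subseteq R_{i+1}$. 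Moreover, unless $R_i = V$ already, $R_i'$ strictly contains $R_i$ whenever $R_i \ne V$: any node in $V\setminus R_i$ at distance $1$ from $R_i$ is in $R_i' \setminus R_i$ (such a node exists by connectivity of $G$ as long as $R_i \ne V$), so $|N_j(\vst)|$-type growth gives $R_i \supseteq N_{i-1}(\vst^{(1)})$ by a straightforward induction, with the base case $R_1 \supseteq N_0(\vst^{(1)}) = \{\vst^{(1)}\}$ being immediate. Since the eccentricity of any node is at most $D$, we get $R_D = V$, so DFT number $D$ (the $D$-th after the first color change) visits all of $V$; this happens before the agent changes its color for the $(D+1)$-st time, proving the corollary. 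The only subtlety to handle carefully is the bookkeeping of ``which color change starts which DFT'' at the very beginning of the execution — the first color change (Lemma~\ref{lemma:DFTtime}) initiates DFT~$1$ from a clean per-DFT starting configuration, and everything is counted from there, so ``before changing color $D+1$ times'' and ``during one of the first $D$ DFTs'' coincide.
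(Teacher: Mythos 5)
Your proof is correct and follows essentially the same route as the paper's (one-paragraph) argument preceding the corollary: Lemma~\ref{lemma:expand} gives that DFT $i$ covers and recolors $R_i' = R_i \cup \bigcup_{v\in R_i}N(v)$, the forced color change yields $S_{i+1}\cap R_i'=\emptyset$ and hence $R_i'\subseteq R_{i+1}$, and the resulting one-hop-per-DFT growth $R_i \supseteq N_{i-1}(\vst)$ gives the bound. One small slip: the claim ``$R_D = V$'' is not justified (your induction only gives $R_D \supseteq N_{D-1}(\vst)$), but this is harmless because DFT $D$ visits $R_D' \supseteq N_D(\vst) = V$, which is exactly what the corollary needs.
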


\begin{theorem}
\label{theorem:pran}
The following propositions hold:
(i) Algorithm $\pran{c}$ is a randomized
self-stabilizing exploration algorithm
for all simple, undirected, and connected graphs,
(ii) irrespective of $c$, the cover time is $O(mD)$ steps
with probability $1$,
(iii) the expected cover time is 
$\rctimeinline$ steps, and
(iv) the agent memory space is $O(\log c)$
and the memory space of each node $v$ is $O(\log (\delta_v+c))$.
\end{theorem}

\begin{proof}
The first and the second claims of this theorem
immediately follow from Lemma \ref{lemma:DFTtime}
and Corollary \ref{cor:visitall}.
The fourth claim holds because the agent maintains exactly one color (from the set $\{1,2,\dots,c\}$) in its memory and on the whiteboard of each node, and stores the parent pointer of each node $v$ (from the set $\{0,1,\dots,\delta_v\}$) on the whiteboard of each node $v$,
and do not use any other variable with a super-constant size.
In the rest of this proof,
we prove the third claim of this lemma:
the agent visits all nodes within
$\rctimeinline$ steps in expectation.
Let $X$ denote how many times the agent changes its color
before it visits all nodes in $V$. 
By Lemma \ref{lemma:DFTtime} and $\Pr[X \le D+1]=1$,
it suffices to show
$\ex[X] = O(n/c +1)$
and $\ex[X] = O(D /c + \log n)$.
We exclude the case $c = 2$ without loss of generality because
these equalities immediately follow from $X \le D+1$ when $c=2$.

First, we prove that $\ex[X] = O(n/c + 1)$.
For each integer $i \ge 1$, we define the $i$-th DFT as the DFT executed by the agent immediately after it has changed its color exactly $i$ times. Let $V_i$ be the set of nodes visited by the agent during the $i$-th DFT. Fix some $i \ge 1$ and let $V' \subseteq V$ be an arbitrary superset of $V_i$ satisfying $|V' \setminus V_i| = \lfloor c/2 \rfloor$, with the induced subgraph of $V'$ being connected. If $|V_i| + \lfloor c/2 \rfloor > n$, we define $V' = V$.
By Lemma \ref{lemma:expand}, if the color chosen by the agent for the $(i+1)$-th DFT differs from the colors of all nodes in $V' \setminus V_i$, then the agent visits all nodes of $V'$ during the $(i+1)$-th DFT. In that case, we have $|V_{i+1}| \ge \min(|V_i| + \lfloor c/2 \rfloor, n)$. Such an event occurs with probability
$$
p \ge 1 - \frac{|V' \setminus V_i|}{c - 1} \ge 1 - \frac{\lfloor c/2 \rfloor}{c - 1} = \Theta(1),
$$
assuming $c \ge 3$.
Thus, in expectation, $|V_i|$ increases by at least $\lfloor c/2 \rfloor$ or reaches $n$ within $1/p = O(1)$ DFTs, until $V_i = V$ holds. This yields
$$
\ex[X] \le \left\lceil \frac{n}{\lfloor c/2 \rfloor} \right\rceil \cdot \frac{1}{p} = O\left(\frac{n}{c} + 1\right).
$$

Next, we prove $\ex[X] = O(D /c  + \log n)$.
By Lemma \ref{lemma:expand},
the agent always begins DFTs at the same node.
We denote this node by $\vst$.
For any node $v$,
consider a shortest path $p = v_0, v_1, \dots, v_{\ell}$ from $\vst$ to $v$,
\ie $v_0 = \vst$ and $v_{\ell}=v$.
Let $Y$ be the maximum integer such that $v_Y$ has already been visited. Note that $Y$ is monotonically non-decreasing.
For each iteration of DFT,
we say that the iteration \emph{succeeds}
if by this iteration, $Y$ increases by at least $\lfloor (c-1)/2\rfloor$
or reaches to $\ell$.
The agent visits $v=v_{\ell}$ before
it performs $\chi = D/\lceil (c-1)/2 \rceil$ successful iterations of DFT.
Since the agent chooses a new color uniformly at random among
$c-1$ colors when it begins a new DFT, 
each iteration of DFT succeeds with a probability of at least $1/2$.
Suppose the agent repeats a DFT $2\chi + \lceil 16\log_e n\rceil$ times,
and let $Z$ be the number of successful iterations among those DFTs.
By Chernoff bound, $Z \ge \ex[Z]/2\ge \chi$
holds with a probability
of at least $1-\exp(-\ex[Z]/8)\ge 1-\exp(-2\log_e n)=1-n^{-2}$.
Thus, the agent visits any node $v$ in $O(D/c+\log n)$ DFTs
with probability $1-n^{-2}$.
Thus, $X = O(D/c+\log n)$ with high probability,
yielding $\ex[X] = O(D/c+\log n)$.
 \end{proof}


\section{Deterministic Exploration}
\label{sec:det}
The randomized algorithm
$\pran{c}$, which is presented in the previous section,
achieves self-stabilizing exploration in a small number of steps in expectation. The key idea is to repeat DFTs with changing the color of the agent randomly. Unfortunately, we cannot make use of the same idea if we are not allowed to use randomization
and we choose a new color deterministically among $c$ candidates. For example, consider an $(n/2,n/2)$-lollipop graph $G_L=(V_L,E_L)$, which consists of a clique $K_{n/2}$ of $n/2$ nodes and a path $P_{n/2}$ of $n/2$ nodes, connected by a bridge edge. Let $u_1, u_2, \dots, u_{n/2}$ be the nodes in $P_{n/2}$ such that one endpoint $u_1$ neighbors one node of clique $K_{n/2}$. Suppose that the agent begins a new DFT in a configuration where it is located in the clique, and all the nodes in the clique have the same color. (Figure \ref{fig:lolli}.)
For $i=1,2,\dots,n/2$, let $c_i$ be the colors that the agent (deterministically) chooses for the $i$-th DFT.
Then, the adversary can assign color $c_i$ to each node $u_i$ in the path in the initial configuration. In this case, the deterministic version of $\pran{c}$ requires $\Omega(n^3)$ steps to visit all the nodes in $G_L$
because the agent visits $u_i$ in the $i$-th DFT but it immediately backtracks without visiting $u_{i+1}$,
and each DFT involves the move through every edge in the clique,
resulting in $\Theta(n^2)$ steps.

\begin{figure}
  \centering
  \includegraphics[width=0.7\linewidth]{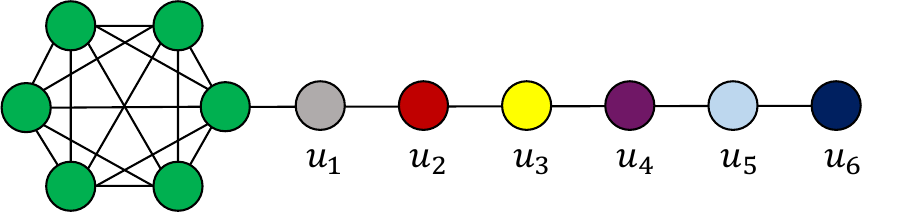}
  \caption{A (6,6)-lollipop graph}
 \label{fig:lolli}
\end{figure}
 
In this section, 
we give a deterministic
self-stabilizing exploration algorithm $\pdet{k}$
with design parameter $k$.
We must satisfy $k \ge \max(D,\dmax)$,
otherwise this algorithm no longer guarantees the correctness.
Thus we require the knowledge of 
an upper bound on $\max(D,\dmax)$.
Unlike $\pran{c}$,
the cover time of this algorithm does not depend on parameter $k$:
the agent visits all nodes within $O(m+nD)$ steps starting from any configuration.
Algorithm $\pdet{k}$ requires $O(\log k)$ bits both in the agent memory
and the whiteboard of each node $v \in V$.
Throughout this section,
we denote by $\dist(u,v)$ the distance between two nodes $u$ and $v$,
\ie the length of a shortest path from $u$ to $v$.

Whereas $\pran{c}$ is based on depth first traversals, $\pdet{k}$ is based on Breadth First Traversals (BFTs). Specifically, the agent attempts to construct a Breadth First Search (BFS) tree rooted at some node $r \in V$. Constructing a BFS tree by a single agent in a self-stabilizing setting poses several challenges, especially when bounding the time complexity by $O(m+nD)$ steps. In the rest of this section, we first introduce some terminologies and then explain the challenges posed by self-stabilization and how we address them.



In the remainder of this section, we present all key ideas of $\pdet{k}$, which are sufficient to verify the proof of our main theorem. A more detailed description of $\pdet{k}$, including its pseudocode, is provided in the appendix (Section \ref{sec:implement}), which may be helpful for gaining a deeper understanding of $\pdet{k}$.

\begin{figure}[t]
  \centering
  \includegraphics[width=0.5\linewidth]{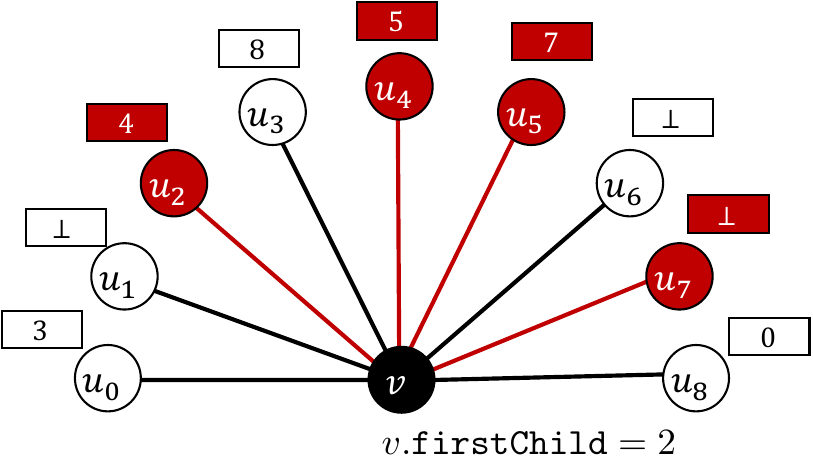}
  \caption{A valid node $v$ with $N(v)=\{u_0,u_1,\dots,u_8\}$.
  The index $i$ of each $u_i$ represents the corresponding number in $v$,
  \ie $u_i=N(v,i)$. The number in each rectangle represents
  the value of $u_i.\nextchild$.  
  In addition, $v = N(u_i,u_i.\parent)$ and $u_i.\level = v.\level = 1$
  must hold for all $i\in\{2,4,5,7\}$.
  Then, $u_2$, $u_4$, $u_5$, and $u_7$ are the children of $v$.
  }
 \label{fig:validNode}
\end{figure}

\subsection{Consistent Tree}
In $\pdet{k}$, the agent attempts to maintain a BFS tree that it can periodically traverse. Due to memory constraints, explicitly storing this tree structure at the nodes is non-trivial: the set of children of a node $v$ cannot be stored at $v$, as only $O(\log k)$ bits are available on $v$'s whiteboard. In this subsection, we explain how to embed a tree within the graph while adhering to these memory constraints. The main idea is as follows: since any tree has at most $n - 1$ parent-child relationships, we can store this information in a distributed manner across the $n$ nodes, with each node storing only $O(\log k)$ bits. Our approach is inspired by the well-known ``left-child, right-sibling'' representation. 

Each node $v \in V$ maintains the following whiteboard variables:
\begin{align*}
&v.\parent \in \{0,1,2,\dots,\delta_v-1\}, \quad v.\firstchild, v.\nextchild \in \{\bot, 0,1,\dots,k-1\},\\
&v.\level \in \{0,1,\dots,k\}.
\end{align*}
A node $v$ is \emph{valid} if $v.\firstchild = \bot$ or there is a sequence of nodes $c_1, c_2, \dots, c_{\ell}$ belonging to $N(v)$ with $\ell \geq 1$ that satisfies the following conditions:
\begin{itemize}
    \item $c_i.\parent = v$ and $c_i.\level = v.\level + 1$ for all $i \in [1, \ell]$,
    \item $p_v(\{v, c_i\}) < p_v(\{v, c_{i+1}\})$ and $c_i.\nextchild = p_v(\{v, c_{i+1}\})$ for all $i \in [1, \ell-1]$,
    \item $N(v,v.\firstchild)=c_1$, and
    \item $c_{\ell}.\nextchild = \bot$.
\end{itemize}
We call those nodes $c_1, c_2, \dots, c_{\ell} \in N(v)$ the children of $v$.
A typical example of a valid node is provided in Figure \ref{fig:validNode}.

A tree $T = (V_T, E_T)$ rooted at $r \in V$ is \emph{consistent} if $r.\level = 0$, all nodes in $V_T$ are valid, and for any $u,v \in V$, $u$ is a child of $v$ if and only if $(u, v) \in E_T$ holds.
By definition, we observe the following proposition.
\begin{observation}
\label{obs:uniqe_tree}
For any configuration $C$ and any node $v \in V$, there is at most one consistent tree that contains $v$ in $C$.
\end{observation}
We say that a node $v \in V$ is \emph{consistent} if there is a consistent tree $T=(V_T,E_T)$ such that $v \in V_T$.
Observation \ref{obs:uniqe_tree} implies that the consistent tree that contains a consistent node $v$ is uniquely determined.
We denote such a tree by $T(v)$. Notice $T(v)$ may not be rooted at $v$.

In the remainder of this paper, for any tree $T=(V_T,E_T)$, we use $v \in T$ and $e \in T$ to denote $v \in V_T$ and $e \in E_T$, respectively. We define $|T|$ as the number of nodes in the tree, i.e., $|T| = |V_T|$.


Note that a consistent tree $T=(V_T,E_T)$ is not necessarily a subgraph of a BFS tree. For example, there may be some node $v \in V_T$ for which $v.\level > \dist(r,v)$. Additionally, there could be a node $u \notin V_T$ such that $\dist(r,u) < \dist(r,w)$ for some $w \in V_T$.

\subsection{Tree Expansion}
\label{sec:expand}
Our deterministic algorithm $\pdet{k}$ expands $T(\vcur)$ by incorporating new nodes one by one. The first challenge is determining how to expand $T(\vcur)$. Consider the scenario where the agent is located at a consistent node $v \in V$, and explores an edge $e=\{v,u\} \notin T(v)$. If $u \notin T(v)$, the agent should incorporate the destination node $u$ into $T(v)$
by setting $u.\parent \gets p_u(e)$ and $u.\level = v.\level+1$,
initializing $u.\firstchild \gets \bot$,
and updating $u.\nextchild$ and $v.\firstchild$ appropriately.
However, the agent must refrain from making these changes if $u$ is already part of $T(v)$.
A significant challenge arises here because when the agent visits node $u$, another node $w \in T(v)$ may be in the same state as $u$, complicating the determination of whether $u$ belongs to $T(v)$.
\begin{figure}
  \centering
  \includegraphics[width=\linewidth]{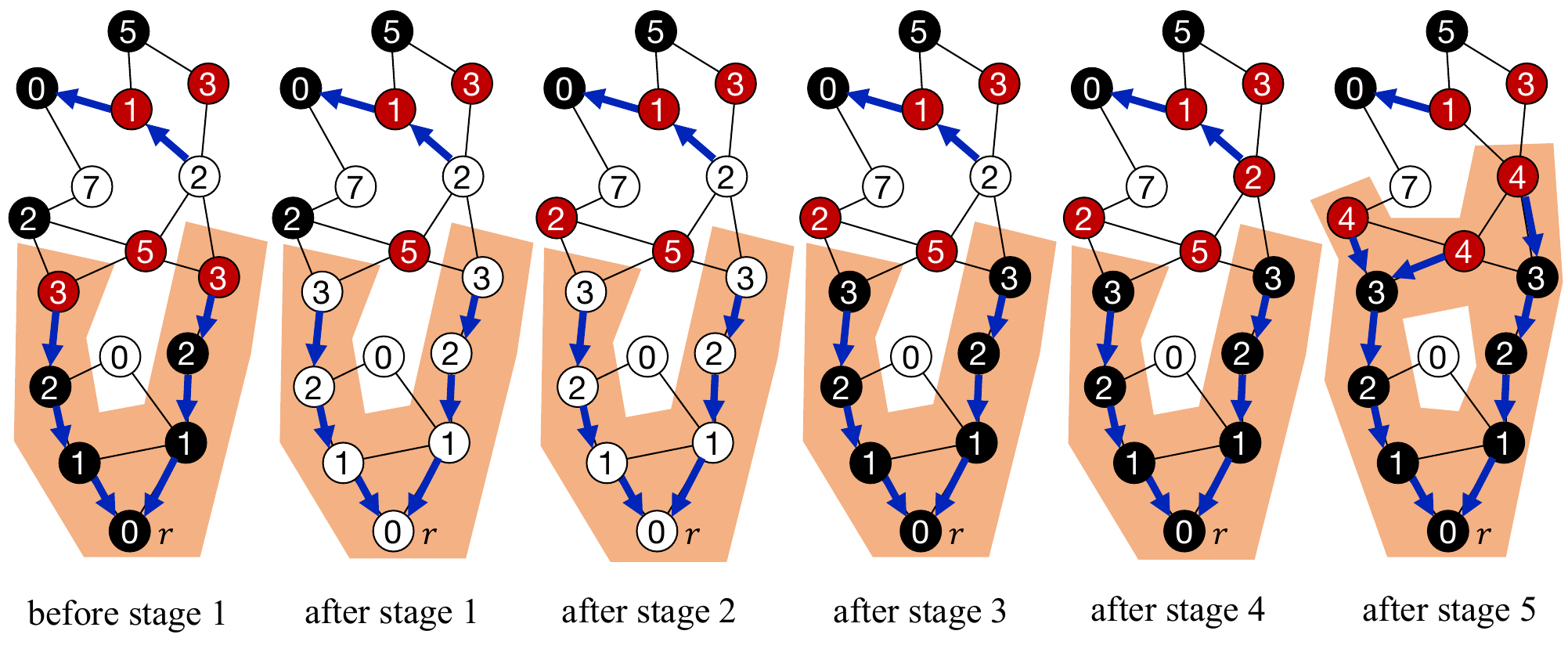}
  \caption{How to expand tree $T(r)$ in phase $3$. A number in a circle represents the level of the corresponding node.
  An arrow from node $u$ to $v$ indicates that
  $u$ is a child of $v$.
  The orange region represents which nodes join $T(r)$.
  }
 \label{fig:expand}
\end{figure}

We resolve this challenge by maintaining two agent variables $\self.\phase \in \{0,1,2,\dots,k-1\}$ and $\self.\stage \in \{1,2,3,4,5\}$, and one whiteboard variable $v.\clr \in \{B,W,R\}$ for each $v \in V$. Here, $B$, $W$, and $R$ stand for \emph{black}, \emph{white}, and \emph{red}, respectively. When $\self.\phase = i$, the agent is said to be in phase $i$. Suppose the agent begins phase 0 at node $r \in V$. In this phase, the agent visits all neighbors of $r$, incorporating them such that $T(r)$ becomes the star centered at $r$, containing all nodes in $N_1(r)$.
For any $i \ge 1$, let $V(r,i)$ be the set of nodes with level $i$ in $T(r)$, and $\vtarget(r,i)$ be the set of nodes that are neighbors to a node in $V(r,i)$ and are not included in $T(r)$. The objective of phase $i$ is to incorporate all nodes in $\vtarget(r,i)$ into $T(r)$. Each phase consists of five sub-phases, or \emph{stages}, managed by the variable $\self.\stage \in \{1,2,3,4,5\}$, described as follows.
(Figure \ref{fig:expand} illustrates a typical example.)
\begin{itemize}
\item \textbf{Stage 1 (Coloring the Current Tree White)}: The agent circulates within $T(r)$ (\ie{} visits all nodes in $T(r)$ in $2|T(r)|-2$ steps), coloring all nodes white.
\item \textbf{Stage 2 (Coloring the Black Targets Red)}: The agent recirculates within $T(r)$. During the circulation, each time it visits a node $v$ at level $i$, it visits each neighbor $u \in N(v)$ and colors $u$ red if $u$ is black.
\item \textbf{Stage 3 (Coloring the Current Tree Black)}: The agent recirculates within $T(r)$, this time coloring all nodes in $T(r)$ black.
\item \textbf{Stage 4 (Coloring the White Targets Red)}: The agent recirculates within $T(r)$. Similar to Stage 2, as it visits each node $v$ at level $i$, it visits all of $v$'s neighbors, this time changing the color of white nodes to red.
\item \textbf{Stage 5 (Incorporating the Targets)}: In the final circulation of $T(r)$, similar to Stage 2, the agent visits all nodes that neighbor a node at level $i$ in $T(r)$, incorporating any red nodes it encounters into $T(r)$.
Thereafter, it updates $\self.\phase \gets (\self.\phase+1) \mod k$.
\end{itemize}
During Stages 1 and 2, all nodes in $\vtarget(r,i)$ that are initially black are colored red, and during Stages 3 and 4, all initially white nodes are turned red. By the beginning of Stage 5, all nodes in $\vtarget(r,i)$ are red, and all nodes in $T(r)$ are black, thus allowing the agent to clearly distinguish and incorporate nodes from $\vtarget(r,i)$ into $T(r)$.

In every stage, the agent circulates $T(r)$ exactly once. This circulation is easily accomplished as follows:
Here, we refer to the movement from a parent to a child as a \emph{forward move} and the movement from a child to a parent as \emph{backtracking}. Each time the agent makes a forward move and visits a node $u$, it moves to $u$'s first child $N(u,u.\firstchild)$ if $u.\firstchild \neq \bot$; otherwise, it simply returns to $u$'s parent.
Suppose the agent backtracks from a node $u$ to $u$'s parent, say $v$, and $u$ is the $i$-th child of $v$. Before backtracking, the agent tentatively memorizes $u.\nextchild$ in its own variable $\self.\nextchild$. Thus, after backtracking, it can go to the $i+1$-th child of $v$ if one exists. If $u$ is the last child of $v$, $\self.\nextchild$ must be $\bot$ at that time.
Then, the agent returns to $v$'s parent, or terminates the circulation if $v = r$.
This process clearly circulates $T(r)$, moving through every edge in the tree exactly twice, requiring $2|T(r)| - 2$ steps.


In Stages 2, 4, and 5, each time the agent visits a node $v$ at level $\self.\phase$, it must visit all of $v$'s neighbors. At this point, the agent moves to the neighbors in descending order of the port numbers; specifically, it visits $u_{\delta_v-1}, u_{\delta_v-2}, \dots, u_0$ in this order, where $u_i = N(v,i)$ for $i \in [0,\delta_v-1]$.
This ordering helps the agent efficiently incorporate red neighbors of $v$ into $T(r)$ during Stage 5. Initially, the agent sets both $v.\firstchild$ and $\self.\nextchild$ to $\bot$. Thereafter, each time the agent visits a red neighbor $u_i$, it first updates $u_i.\nextchild \gets \self.\nextchild$ in addition to $u_i.\level = v.\level + 1$, $u_i.\parent = \pin$, and $u_i.\firstchild \gets \bot$. After returning to $v$, it updates $\self.\nextchild \gets i$ and $v.\firstchild \gets i$.
Once this process is completed, all red nodes in $N(v)$ are incorporated into $T(r)$, ensuring that $T(r)$ remains consistent.

\begin{lemma}
\label{lemma:expand2}
If the agent begins phase $i$ at a consistent node $r$ with level $0$, $\vtarget(r,i)$ is incorporated into $T(r)$ before the end of phase $i$. Phase $i$ ends in $O\left(n+\sum_{v \in V(r,i)}\delta_v\right)$ steps.
\end{lemma}
\begin{proof}
The correctness of the first part of the lemma, concerning the incorporation of $\vtarget(r,i)$ into $T(r)$ in phase $i$, follows from the above discussion.
We derive the second part from the fact that:
(i) each of the five circulations on $T(r)$ requires only $2|T(r)|-2=O(n)$ steps, and
(ii) in Stages 2, 4, and 5, for each node $v \in T(r,i)$, the agent moves from $v$ to each $u \in N(v)$ and vice versa, which requires exactly $3 \cdot 2 \cdot \sum_{v \in V(r,i)}\delta_v$ steps in total.
 \end{proof}

\begin{corollary}
\label{cor:after_reset}
Once the agent begins phase $0$,
it visits all nodes within $O(m+nD)$ steps.
\end{corollary}
\begin{proof}
Let $r$ be the node at which the agent begins phase $0$. As defined, phase $0$ completes within $O(\delta_r)$ steps. During the subsequent phases $1, 2, \dots, D-1$, by Lemma \ref{lemma:expand2}, all nodes are incorporated into $T(r)$ within $O(m+nD)$ steps in total. 
 \end{proof}

\begin{figure}[t]
  \centering
  \includegraphics[width=0.5\linewidth]{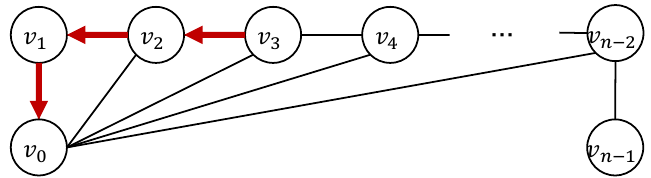}
  \caption{A consistent but undesirable tree}
 \label{fig:worstcase}
\end{figure}

\subsection{Towards Fast Reset}
\label{sec:fast_reset}

Since we seek a self-stabilizing solution, we must address arbitrary initial configurations where the current node may not be consistent, or $T(\vcur)$ significantly deviates from a BFS tree structure. How can we ensure that the agent visits all nodes within $O(m+nD)$ steps starting from such a configuration? According to Corollary \ref{cor:after_reset}, this is feasible if the agent can reset $\self.\phase$ to zero and initiate phase $0$ within every $\Theta(m+nD)$ steps with a sufficiently large hidden constant. However, the agent does not know the values of $n$, $m$, or $D$; it only knows an upper bound $k \ge \max(D,\dmax)$, which may significantly exceed $D$, i.e., $k = \omega(D)$. Therefore, we require a sophisticated mechanism to reset $\self.\phase$ to zero within $O(m+nD)$ steps from any configuration, without harming tree expansion so that the validity of Corollary \ref{cor:after_reset} is preserved.

It is straightforward to handle configurations where the current node, $\vcur$, is not consistent. As mentioned earlier, the agent attempts to circulate the consistent tree $T(\vcur)$ in every stage of each phase, assuming that $\vcur$ is consistent. If $\vcur$ is not consistent, the agent will encounter some inconsistency within $O(m)$ steps during the circulation process, or $\vcur$ becomes consistent. For instance, an inconsistency is detected when the agent moves from a node $v$ to $u$, one of $v$'s children, but either $u.\parent \neq p_u(\{u,v\})$ or $u.\level \neq v.\level + 1$ holds.
The agent begins phase $0$ whenever it detects any such \emph{local} inconsistency.
(See \ref{sec:implement} for detailed rules on detecting inconsistencies).
Hence, in the rest of this section,
we assume that $\vcur$ is consistent without loss of generality. 

The main challenge is handling configurations where $\vcur$ is consistent, yet $T(\vcur)$ significantly deviates from a BFS tree structure. To illustrate the difficulty of this scenario, consider the configuration shown in Figure \ref{fig:worstcase}, where the agent begins phase 3 at a node $r = v_0$ and $T(r) = \{(v_3, v_2), (v_2, v_1), (v_1, v_0)\}$. In this configuration, edges $\{v_0, v_1\}, \{v_0, v_2\}, \dots, \{v_0, v_{n-2}\}$ are never utilized in subsequent tree circulations, as $v_1$ is the only child of $v_0$. Consequently, exactly one node is added to $T(r)$ in each phase, requiring $\Omega(n^2)$ steps to visit all nodes and reset $\self.phase$ to zero, despite $D = 2$
and $m=O(n)$.

If we ignore time complexity, this challenge can be easily addressed. 
For example, consider what happens if we require $T(\vcur)$ to have some kind of BFS tree structure in addition to being consistent. Specifically, let $r$ be the root of $T(\vcur)$ and suppose we require:
(i) for any $v \in T(r)$, $v.\level = \dist(r,v)$,
and (ii) all nodes $v \in N_H$ are contained in $T(\vcur)$, where $H$ is the height of $T(r)$, \ie $H=\max\{u.\level \mid u \in T(r)\}$. 
Then, the agent can verify these conditions in each circulation of $T(\vcur)$ by visiting all neighbors of each node in $T(\vcur)$ and reset $\self.\phase$ to zero if it finds any violations of these conditions.
However, this may require $\Omega(m)$ steps in each circulation, resulting in $\Omega(mD)$ steps in the entire execution of the algorithm.

We address the second challenge (\ie deviation from the BFS structure) by adding the following three rules:
\begin{itemize}
\item \textbf{Rule 1}: 
In Stage 5 of each phase, the agent checks whether it has incorporated at least one node into $T(\vcur)$. If no node has been incorporated, indicating that $T(\vcur)$ can no longer be expanded, the agent resets $\self.\phase$ to zero.
\item \textbf{Rule 2}: In Stage 1 of each phase $i$, 
the agent resets $\self.\phase$ to zero if it finds a node $v \in V(r,i)$ such that $v.\firstchild \neq \bot$.
\item \textbf{Rule 3}: As explained above, in Stage 5 of each phase $i$, the agent visits all neighbors of each node $v \in V(r,i)$.
The agent resets $\self.\phase$ to zero if it finds a black neighbor $u \in N(v)$ with $u.\level < \self.\phase - 1 = i -1$.
\end{itemize}
In what follows, we first observe that these three rules do not invalidate Corollary \ref{cor:after_reset}, and then demonstrate that these rules ensure a high frequency of resetting $\self.\phase$.

\begin{observation}
\label{obs:alive}
Corollary \ref{cor:after_reset} still holds even with the addition of the above three rules.
\end{observation}

\begin{proof}
Suppose the agent begins phase 0 at a node $r$. The first rule is applicable only after the agent has incorporated all nodes into $T(r)$ and thus does not affect the corollary.
The second rule is never triggered because the tree expansion proceeds hop by hop after initiating phase 0.
The third rule is also never applicable in the execution subsequent to resetting $\self.\phase$ to zero, because at the beginning of each phase $i$, $T(\vcur)$ forms a shortest path (\ie BFS) tree over $G[N_i(r)]$. This configuration ensures that the difference in levels between any two nodes $u, v \in V'$, where $V'$ is the set of nodes in $T(\vcur)$ at the start of phase $i$, satisfies $|u.\level - v.\level| \le 1$. Moreover, all black nodes that the agent visits in Stage 5 must be within $V'$: all neighbors of the nodes in $V(r,i)$ that are not part of $V'$ are colored red in either Stage 2 or 4.
Thus, the condition of the third rule is never satisfied.
 \end{proof}


\begin{lemma}
\label{lemma:begins_zero}
Starting from any configuration, the agent begins phase 0 in $O(m+nD)$ steps.
\end{lemma}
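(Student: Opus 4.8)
The plan is to separate the count into (a) the number of moves made within a single phase together with the total number of moves ever made inside $\expand()$, and (b) the number of phases that can elapse before $\self.\phase$ is reset to $0$, and then combine the two.

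For (a), note first that the agent begins Phase $0$ in exactly one of three ways: some call to $\circulate()$ sets $\self.\error=\tr$; a phase of index $\ge 1$ ends with $\self.\found=\fl$ (Line 8); or $\self.\phase$ wraps around modulo $k$. Within a phase the agent invokes $\circulate()$ at most five times, and I would argue that one call to $\circulate()$ makes $O(n)$ moves inside the tree plus the moves it triggers in $\expand()$: the $\dist$-consistency test and the parent/child tests in $\move()$ guarantee that the walk the agent performs is either a genuine depth-first traversal of a rooted subtree on at most $n$ nodes (each tree edge traversed twice, the descent pruned at depth $\self.\phase$), or an inconsistency is found within $O(n)$ moves, after which $\circulate()$ returns and Phase $0$ begins at the end of that phase. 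For the $\expand()$ moves I would charge globally: $\expand()$ is invoked at a node $v$ only while $\self.\dist=\self.\phase=v.\dist$, each such invocation costs $O(\delta_v)$, and $\self.\phase$ is strictly increasing (with at most one wraparound) until Phase $0$ begins, so each node $v$ is the target of $O(1)$ invocations over the entire prefix up to the start of Phase $0$; hence the $\expand()$ moves total $O(\sum_v \delta_v)=O(m)$.

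For (b) --- the crux --- I would show that Phase $0$ begins within $O(D)$ phases. If some $\circulate()$ raises the error flag, Phase $0$ begins at the end of that phase, so it suffices to bound the index of that phase; otherwise I work under the hypothesis that no error is ever raised, so the $\move()$ tests force $T(\vst)$ to be, at all times, a rooted tree whose root has $\dist=0$ and in which $v.\dist$ equals the depth of $v$. Then I would track the growth of $T(\vst)$: in Stage $5$ of phase $i$ every red neighbor of a depth-$i$ node is attached at depth $i+1$ and recolored black, while the error rule of Line 49 forbids a depth-$i$ node --- or, examined one phase later, a freshly attached depth-$(i+1)$ node --- from being adjacent to any black tree node two or more levels shallower. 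From this I would deduce that the nodes attached in a phase always lie within $G$-distance $1$ of the current frontier, so that a potential argument on $\self.\phase$ versus the tree's extent bounds the number of growth phases by $O(D)$; since every phase that attaches nothing has $\self.\found=\fl$, Phase $0$ then begins. The case $k\le D$ is even easier: because $k\ge D-1$ is required, the wraparound of $\self.\phase$ by itself forces Phase $0$ within $k=O(D)$ phases. Combining (a) and (b), and adding the $O(n)$ moves absorbed by the arbitrary partial stage the agent may be executing at the start of the execution, yields $O(D)\cdot O(n)+O(m)+O(n)=O(m+nD)$.

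The step I expect to be the main obstacle is (b), and within it the reconciliation of the tidy ``$T(\vst)$ grows only at its frontier'' picture with the arbitrary initial whiteboard contents. Before the first error is detected the configuration may contain tree nodes that lie below the current frontier; such a node can be reddened in Stage $2$ or $4$ and then re-incorporated at a smaller depth in Stage $5$, creating a parent/child mismatch on the edge to its former parent. I would need to show that each such spurious re-incorporation strictly decreases that node's depth --- so it happens at most once per node --- that it leaves an inconsistency which is detected (triggering Phase $0$) within $O(D)$ further phases, and that it can neither enlarge $T(\vst)$ nor push the number of growth phases past $O(D)$. Identifying the invariant that survives an arbitrary start and is still strong enough to drive this counting is where the real work lies.
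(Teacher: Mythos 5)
Your plan follows the same route as the paper's own justification of this lemma: $O(n)$ moves per tree circulation (or an inconsistency caught by the checks in $\move()$), a global charge of $O(\sum_{v}\delta_v)=O(m)$ for all $\expand()$ calls since each node is a frontier node in $O(1)$ phases, the $\self.\found$ flag to terminate unproductive phases, and the Stage-5 error rule of Line 49 to cap the number of productive phases at $O(D)$. The paper in fact states the lemma with no formal proof beyond that discussion ("from the above discussions"), so the invariant you flag as "the real work" --- taming sub-frontier tree nodes left by the adversarial initial whiteboards, which can be reddened in Stages 2/4 and re-attached in Stage 5 --- is left open there as well; your treatment of that case is, if anything, more explicit than the paper's (one small correction: re-incorporation in Stage 5 of Phase $i$ resets the node's depth to $i+1$, which only \emph{weakly} decreases its former depth, but it still occurs at most once per node because the node is immediately colored black and thereafter lies inside the circulated region).
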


\begin{proof}
As mentioned above, the agent begins phase 0 within $O(m)$ steps by finding a local inconsistency if the current node is not consistent. Thus, we assume the current node is consistent. Since each phase requires only $O(m)$ steps, starting from any configuration, the agent initiates some phase $i$ at some node, say $\vst$, within $O(m)$ steps. By the second rule, we can assume without loss of generality that at that time all nodes $v \in V(\vst,i)$ do not have children in $T(\vst)$, \ie $v.\firstchild = \bot$; otherwise, the agent begins phase 0 in the next $O(m)$ steps.

In the rest of this proof, we show that the agent resets $\self.\phase$ to zero before it enters phase $i + 2D$, thereby supporting the lemma by Lemma \ref{lemma:expand2}. Assume for contradiction that it enters phase $i + 2D$ and focus on the configuration $C$ at that time. Let $V_{< i} = \bigcup_{j < i} V(r,j)$, $u$ any node in $V(r,i)$, and $v$ any node in $V(r,i+D+1)$. Note that the first rule guarantees $V(r,i+D+1) \neq \emptyset$. After initiating phase $i$, the agent expands $T(r)$ hop by hop, avoiding the region of $V_{<i}$. This implies that the distance between $u$ and $v$ in the induced subgraph $G[V \setminus V_{<i}]$ is at least $D+1$; otherwise, $v.\level \le i+D$ holds, contradicting $v \in V(r,i+D+1)$. Since $\dist(u,v) \le D$, a shortest path $v_0, v_1, \dots, v_{\ell}$ between $v$ and $u$ must contain a node in $V_{<i}$, where $v_0 = v$, $v_{\ell} = u$, and $\ell \le D$ holds. Let $s$ be the smallest index such that $v_s \in V_{<i}$, \ie $v_0, v_1, \dots, v_{s-1} \notin V_{<i}$. Since $v=v_0$ and $v_{s-1}$ belong to the same connected component in $G[V \setminus V_{<i}]$, $v_{s-1}$ is contained in $T(r)$ and $v_{s-1} \in V(r,i')$ holds for some $i' \in [i+1, i+D+s]$ in configuration $C$. Note that we have $i' \neq i$ here because otherwise $v.\level \le i+s-1 < i+D+1$ holds, a contradiction with the fact $v \in V(r,i+D+1)$. Therefore, in Stage 5 of phase $i' \le i+D+s < i+D+\ell \leq i+2D$, the agent finds a black node with a level at most $i-1$,
\ie it visits $v_s \in V_{<i}$.
Since $i' \ge i+1$, the agent detects $v_s.\level < i' - 1$ and triggers the third rule, resetting $\self.\phase$ to zero, a contradiction.
 \end{proof}

The following theorem is derived from Observation \ref{obs:alive} and Lemma \ref{lemma:begins_zero}.

\begin{theorem}
\label{theorem:pdet}
Algorithm $\pdet{k}$ solves the self-stabilizing exploration problem for all simple, undirected, and connected graphs with a diameter and maximum degree of at most $k$. The cover time is $O(m+nD)$ steps, regardless of the value of $k$. The memory requirement is $O(\log k)$ for both the agent and each node.
\end{theorem}

\section{Concluding Remarks}
In this paper, two self-stabilizing graph exploration algorithms---a deterministic algorithm and a randomized algorithm---for a single mobile agent are presented. The randomized algorithm $\pran{c}$, parameterized by $c \ge 2$, achieves a cover time of $\rctime$, while requiring $O(\log c)$ bits of agent memory and $O(\log (c + \delta_v))$ bits per node $v$. This algorithm is time-optimal when $c = \Omega(n)$. The deterministic algorithm $\pdet{k}$ achieves a cover time of $O(m + nD)$ and uses $O(\log k)$ bits of memory for both the agent and each node.
This algorithm requires that the parameter $k$ satisfies $k \ge \max(D, \dmax)$; in other words, the agent must have prior knowledge of upper bounds on the diameter $D$ and maximum degree $\dmax$.

An interesting research direction is to investigate self-stabilizing exploration with multiple agents. When two or more agents are involved, the cover time can be defined as the maximum (expected) number of time steps required until every node is visited by at least one agent, starting from any configuration. A natural question arises: Can multiple agents reduce the cover time? If so, can even a constant number of agents achieve a significant reduction, or does the improvement in cover time depend explicitly on the number of agents?

\section*{Acknowledgements}
This work was supported by JST FOREST Program JPMJFR226U and 
JSPS KAKENHI Grant Numbers JP19K11826, JP20KK0232, JP22K11903, JP23K11059, and JP23K28037.

  \bibliographystyle{plain} 
  \bibliography{agent}
  
\clearpage

\appendix
\section{Lower Bound}
\label{sec:lower}
By definition, the cover time of any exploration algorithm is
trivially $\Omega(n)$.
In this section, we give a better lower bound:
the cover time of any (possibly randomized) algorithm is $\Omega(m)$. 
Specifically, we prove the following theorem. 
\begin{theorem}
\label{theorem:lower}
Let $\calp$ be any exploration algorithm.
For any positive integers $n,m$ such that $n \ge 3$ and $n-1 \le m \le n(n+1)/2$,
there exits a simple, undirected, and connected graph $G=(V,E)$
with $|V| = n$ and $|E| = m$
such that
the agent running $\calp$ on $G$
starting from some node in $V$
requires at least $(m-1)/4$ steps
to visit all nodes in $V$ in expectation.
\end{theorem}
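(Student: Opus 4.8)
The plan is to produce, for every admissible pair $(n,m)$, a single hard graph $G$ together with either a worst-case port-labelling (deterministic case) or a distribution over port-labellings (randomized case) that forces $\Omega(m)$ edge traversals before all $n$ nodes can have been visited. First I would dispose of the sparse regime: if $m \le 4n-3$ then any connected $G$ with $|V|=n$, $|E|=m$ already works, because each move reveals at most one new node, so the cover time is at least $n-1 \ge (m-1)/4$. So from now on assume $m$ is large, and the goal is a graph in which the agent is provably forced to ``waste'' $\Omega(m)$ moves searching ports.

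The construction I would use is a clique $K_t$ on $c_1,\dots,c_t$ with the remaining $n-t$ nodes hung on the $c_i$ as pendant leaves --- one per clique node when $n-t\le t$, spread as evenly as possible otherwise --- with $t$ chosen so that $\binom{t}{2}+(n-t)=m$ (with $|E|$ made exactly $m$ by minor local adjustments, e.g.\ letting a few pendants attach to two clique nodes), and the agent started at a clique node. The point is anonymity: a leaf at $c_i$ is reachable only through one particular one of $c_i$'s $\delta_{c_i}\ge t-1$ ports, and nothing distinguishes that port from $c_i$'s clique ports until it is traversed, so visiting all of $c_i$'s leaves forces a port search at $c_i$. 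Against a deterministic $\calp$ this is immediate via an adaptive adversary that, whenever the agent leaves $c_i$ through a fresh port, routes it to another clique node as long as $c_i$ still has an untraversed clique edge, and only reveals a leaf when forced: this makes the agent traverse essentially every clique edge before it discovers the last leaf of each $c_i$, so its cover time is $\ge \binom{t}{2} \ge (m-1)/4$ on every run, hence also in expectation.

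For a randomized $\calp$ I would invoke Yao's principle with the uniform distribution over port-labellings of $G$ --- equivalently, for each $c_i$ the set of its leaf ports is an independent uniformly random $g_i$-subset of its $\delta_{c_i}$ ports --- and bound the expected cover time of an arbitrary \emph{deterministic} algorithm $A$. The key step: for each $c_i$ let $X_i$ be the number of moves $A$ makes out of $c_i$ across a not-yet-traversed port before all $g_i$ leaf edges of $c_i$ have been traversed. Conditioned on $A$'s history, the leaf status of $c_i$'s untraversed ports is exchangeable and independent of which untraversed port $A$ picks next (the labelling being uniform and $A$ having seen only the labels of ports it has already used), so $X_i$ equals in distribution the position of the last of $g_i$ marked items among $\delta_{c_i}$ items placed in random order, whence $\ex[X_i]=\frac{g_i(\delta_{c_i}+1)}{g_i+1}\ge \delta_{c_i}/2$ whenever $g_i\ge 1$. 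These counted moves are distinct across $i$ and across ports, and by the time all nodes are visited the agent has made at least $X_i$ of them for each $i$, so $\ex[\text{cover time}]\ge \frac{1}{2}\sum_{i:\,g_i\ge 1}\delta_{c_i}$; plugging in $\sum g_i = n-t$, $\delta_{c_i}=t-1+g_i$, and $m=\binom{t}{2}+(n-t)$ shows this is $\Omega(m)$, and a short calculation gives $\ge (m-1)/4$ in the regime where the construction applies.

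The hard part will be twofold. First, making the exchangeability/independence claim for $X_i$ airtight: one must rule out the agent extracting information about the leaf status of $c_i$'s untraversed ports from whiteboard contents, from observed in-ports, or from having searched \emph{other} clique nodes. This should follow from a symmetry/coupling argument on the uniform port distribution, but the whiteboards must be handled carefully (for instance by also randomizing the initial whiteboards, or by noting that a fixed symmetric initialization together with the agent's own symmetric writes already suffices). Second, the combinatorial bookkeeping needed to pin the constant at $1/4$ over the whole admissible range: hitting $|V|=n$ and $|E|=m$ exactly is routine, but the densest regime --- where $K_t$ nearly exhausts the vertex set and there are almost no nodes left to serve as leaves --- is genuinely tight and appears to need either a refined construction or a restriction on how close $m$ is allowed to be to $\binom{n}{2}$.
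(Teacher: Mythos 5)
Your overall strategy (explicit hard instances --- a clique with pendant leaves --- analysed by an adaptive adversary in the deterministic case and by Yao's principle over random port labellings in the randomized case) is workable in spirit for mid-range $m$, but it does not prove the theorem as stated, and the gap is larger than the one you flag at the end. Your bound is $\tfrac12\sum_{i:g_i\ge 1}\delta_{c_i}\approx\tfrac12(n-t)t$ in the case $n-t\le t$, against $m\approx t^2/2$; this reaches $(m-1)/4$ only when $n-t\gtrsim t/4$, i.e.\ roughly $m\lesssim 0.32\,n^2$, so the claimed constant already fails on a constant fraction of the admissible $(n,m)$ pairs, and for $m$ near $\binom{n}{2}$, where no vertices remain to serve as leaves, the construction yields nothing at all. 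On top of that, the two steps you defer --- the adversary's consistency bookkeeping, and the exchangeability of the leaf-status of the untraversed ports of $c_i$ given the whiteboard contents and the history accumulated at other clique nodes --- are exactly the places where such arguments usually break, so the proof is incomplete even in the regime where the construction exists.

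The missing idea is that no bespoke hard graph is needed. The paper's argument: take any connected $G'$ with $n-1$ vertices and $m-1$ edges, run $\calp$ on it from some node $\vst$, and note that in the first $(m-1)/2$ steps at most $(m-1)/2$ distinct edges are traversed, so by averaging some edge $\{u,v\}$ is traversed within that horizon with probability at most $1/2$. Subdivide that edge by a new node $w$, keeping all port numbers at $u$ and $v$; the execution on the resulting graph $G$ couples with the execution on $G'$ until that edge is crossed, so with probability at least $1/2$ the agent has not visited $w$ after $(m-1)/2$ steps, giving expected cover time at least $(m-1)/4$. This handles every $m$ for which such a $G'$ exists in one stroke, with no adversary, no Yao, and no case split on density. (For what it is worth, the stated range $m\le n(n+1)/2$ exceeds $\binom{n}{2}$ and is surely a typo; even the paper's construction needs $m-1\le\binom{n-1}{2}$, so the very densest regime is delicate for both approaches --- but your construction degrades much earlier than that.)
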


\begin{proof}
For simplicity, we assume that $\calp$ is
a randomized algorithm. However, the following proof also
holds without any modification
even if $\calp$ is a deterministic one.

Let $G'=(V',E')$ be any simple, undirected, and connected graph
such that $|V'|=n-1$ and $|E'|=m-1$. There must be such a graph
because $n-1 \ge 2$ and $m-1 \ge (n-1)-1$.
Suppose that the agent runs $\calp$ on $G'$
starting from any node $\vst \in V'$.
Let $X_{u,v}$ be the indicator random variable such that
$X_{u,v}=1$ if the agent traverses edge $\{u,v\}$
(from $u$ to $v$ or from $v$ to $u$)
in the first $(m-1)/2$ steps,
and $X_{u,v}=0$ otherwise.
Clearly, the agent traverses at most $(m-1)/2$ edges 
in the first $(m-1)/2$ steps.
Therefore, $\sum_{\{u,v\} \in E'} X_{u,v} \le (m-1)/2$
always hold,
thus $\sum_{\{u,v\} \in E'} \ex[X_{u,v}] = \ex[\sum_{\{u,v\} \in E'} X_{u,v}]\le (m-1)/2$. This yields that there exists
an edge $\{u,v\} \in E'$ such that
$\Pr[X_{u,v}=1] = \ex[X_{u,v}] \le 1/2$.

Let $\{u,v\} \in E'$ such that $\Pr[X_{u,v}=1] \le 1/2$
and define $G=(V,E)$ as the graph that we obtain
by modifying $G'$ as follows:
remove edge $\{u,v\}$,
introduce a new node $w \notin V'$,
and add two edges $\{w,u\}$ and $\{w,v\}$.
Formally, $V=V'\cup \{w\}$ and $E=E' \cup \{\{w,u\},\{w,v\}\}\setminus \{\{u,v\}\}$.
By definition, graph $G$ is simple, undirected, and connected.
Then, the agent running $\calp$
on $G$ starting from node $\vst$
requires at least $(m-1)/4$ steps in expectation
to visit all nodes in $V$
because the agent does not visit node $w$ in the first $(m-1)/2$
steps with probability at least $1/2$.
\end{proof}

\section{Detailed Description of Deterministic Algorithm}
\label{sec:implement}

\setcounter{AlgoLine}{0}
\begin{algorithm}
\caption{Main Routine of $\pdet{k}$ and function $\initialize()$}
\label{al:pdet_main}
\VarAgent{
\\
\varspace $\self.\phase,\self.\level \in  \{0,1,\dots,k-1\}$,\\
\varspace $\self.\stage \in \{1,2,3,4,5\}$,\\
\varspace $\self.\nextchild \in \{\bot, 0,1,\dots,k-1\}$,\\
\varspace $\self.\error,\self.\found \in \{\fl,\tr\}$,\\
\varspace $\self.\mode \in \{-1,1\}$
}
\aspace

\VarNode{
\\
\varspace $v.\parent \in \{0,1,\dots,\delta_v-1\}$, \\
\varspace $v.\firstchild, v.\nextchild \in \{\bot, 0,1,\dots,k-1\}$,\\
\varspace $v.\clr \in \{W,B,R\}$, \\
\varspace $v.\level \in \{0,1,\dots,k\}$,\\
\varspace $v.\port \in \{-1,0,\dots,\delta_{v}\}$
}
\aspace 
\Note{
Whenever the agent decides to move via a port not in $[0,\delta_{\vcur}-1]$,
it immediately detects the error and jumps to Line 8.
}
\aspace

\While{$\tr$}{
\If{$\self.\phase = 0$}{
$\initialize()$; $\self.\stage \gets 1$\;
}
\If{$\self.\phase \ge 1$}{
$\circulate()$\;
$\self.\stage \gets (\self.\stage \bmod 5)+1$\;
}
\uIf{
$
\left(
\begin{aligned}
&\self.\error = \tr \lor (\self.\phase \ge 1 \\
&\land \self.\stage = 1 \land \self.\found = \fl) 
\end{aligned}
\right)
$
}{
$\self.\phase \gets 0$; $\self.\error \gets \fl$\;
}
\ElseIf{$\self.\stage = 1$}{
$\self.\phase \gets (\self.\phase + 1) \bmod k$\;
$\self.\found \gets \fl$\;
}
}
\end{algorithm}

\begin{algorithm}
\caption{Function $\initialize()$ in $\pdet{k}$}
\label{al:initialize}
\Fn{$\initialize()$}{
$\self.\level \gets 0$; $\vcur.\level \gets 0$;
$\vcur.\port \gets 0$\;
$\vcur.\firstchild \gets 0$\;
\While{$\vcur.\port < \delta_{\vcur}$}{
$\self.\nextchild \gets \begin{cases}
\vcur.\port + 1& \vcur.\port < \delta_{\vcur}-1\\    
\bot & \text{otherwise}
\end{cases}$\;
Migrate to $N(\vcur,\vcur.\port)$\;
$\vcur.\parent \gets \pin$;
$\vcur.\level \gets 1$\;
$\vcur.\firstchild \gets \bot$; $\vcur.\nextchild \gets \self.\nextchild$\;
Migrate to $N(\vcur,\pin)$\;
$\vcur.\port \gets \min(\vcur.\port + 1,\delta_{\vcur})$\;
}
}

\end{algorithm}

\begin{algorithm}
\caption{Function $\circulate()$ in $\pdet{k}$}
\label{al:circulate}

\Notation{\\
$
\begin{aligned} 
\varspace \nextd(v) = 
\begin{cases}
\self.\nextchild 
& \textbf{if } \self.\nextchild \neq \bot \lor v.\level = 0\\
v.\firstchild &
\textbf{else if } v.\parent = \pin \land v.\firstchild \neq \bot\\
v.\parent &
\textbf{otherwise}  
\end{cases}
\end{aligned}
$\\
\varspace $\colorupdate(a)=
\begin{cases}
W & \self.\stage = 1\\
B & \self.\stage = 3\\
a & \text{otherwise}		   
\end{cases}$
}
\aspace

\Fn{$\circulate()$}{
$\vcur.\clr \gets \colorupdate(\vcur.\clr)$\;
$\move(\vcur.\firstchild)$\;
\While{$\nextd(\vcur) \neq \bot \land \self.\error = \fl $}{
$\vcur.\clr \gets \colorupdate(\vcur.\clr)$\;
\uIf{$\self.\level = \self.\phase$}{
\If{$\self.\stage \in \{2,4,5\}$}{$\expand()$}
$\move(\vcur.\parent)$\;
}
\Else{

$\move(\nextd(\vcur))$\;
}
}
}
\end{algorithm}

\begin{algorithm}
\caption{Function $\move()$ in $\pdet{k}$}
\label{al:move}
\Fn{$\move(q)$}{
 $
\self.\mode \gets
\begin{cases}
-1 & v.\level > 0 \land q = \vcur.\parent
\\
1 & \text{otherwise}
\end{cases}
$\;
$\self.\level \gets \min(k,\max(0,\self.\level +\self.\mode))$\;
\uIf{$\self.\mode = 1$}{
 $\self.\nextchild \gets \bot$ 
 }\Else{
 $\self.\nextchild \gets \vcur.\nextchild$ 
 }
 Migrate to $N(\vcur,q)$\; 
 \If{
$
\begin{aligned}
&\self.\level \neq \vcur.\level\\
&\lor (\self.\nextchild \neq \bot \land \self.\nextchild \le \pin) \\
&\lor (\self.\mode = 1 \land \pin \neq \vcur.\parent)\\
&\lor (\self.\mode = -1 \land \pin = \vcur.\parent) 
\\
&\lor (\self.\stage = 1 \land \self.\phase = \vcur.\level \land \vcur.\firstchild \neq \bot)
\end{aligned}
$
}{
 $\self.\error \gets \tr$
 }
}
\end{algorithm}

\begin{algorithm}[t]
\caption{Function $\expand()$ in $\pdet{k}$}
\label{al:pdet_expand}

\Fn{$\expand()$}{
$\vcur.\port \gets \delta_{\vcur}-1$\;
$\vcur.\firstchild \gets \bot$; $\self.\nextchild \gets \bot$\;
\While{$\vcur.\port \ge 0$}{
\If{$\vcur.\port \neq \vcur.\parent$}{
Migrate to $N(\vcur,\vcur.\port)$\;
\uIf{
$
\left(
\begin{aligned}
&(\self.\stage = 2 \land \vcur.\clr = B)\\
& \lor (\self.\stage = 4 \land \vcur.\clr = W)
\end{aligned}
\right )
$
}{
$\vcur.\clr \gets R$; Migrate to $N(\vcur,\pin)$\;
}
\uElseIf{$\self.\stage = 5 \land \vcur.\clr = R$}{
$\vcur.\level \gets \self.\level+1$\;
 $\vcur.\parent \gets \pin$\;
 $\vcur.\firstchild \gets \bot$\;
 $\vcur.\nextchild \gets \self.\nextchild$\;
Migrate to $N(\vcur,\pin)$\;
$\self.\nextchild \gets \pin$\;
$\vcur.\firstchild \gets \pin$\;
$\self.\found \gets \tr$\;
}
\ElseIf{
$
\left (
\begin{aligned}
&\self.\stage = 5 \land \vcur.\clr = B \\
& \land \vcur.\level < \self.\phase - 1
\end{aligned}
\right )
$
}{
$\self.\error \gets \tr$\;
Migrate to $N(\vcur,\pin)$\;
}
}
$\vcur.\port \gets \max(-1,\vcur.\port-1)$\;
}
}
\end{algorithm}

The pseudocode of $\pdet{k}$ is presented in Algorithms \ref{al:pdet_main}, \ref{al:initialize}, \ref{al:circulate}, \ref{al:move}, and \ref{al:pdet_expand}.
Algorithm \ref{al:pdet_main} gives the list of variables
and the main routine, while Algorithms 
\ref{al:initialize}, \ref{al:circulate}, \ref{al:move}, and \ref{al:pdet_expand} give subroutines
$\initialize()$, $\circulate()$, $\move()$, and $\expand()$, respectively.

In addition to the variables we explained above (i.e., $\self.\phase$, $\self.\stage$, $\self.\nextchild$, $v.\parent$, $v.\firstchild$, $v.\nextchild$, $v.\level$, and $v.\clr$), the agent also maintains four additional variables: $\self.\level \in \{0,1,\dots,k\}$, $\self.\error \in \{\fl,\tr\}$, $\self.\found \in \{\fl,\tr\}$, and $\self.\mode \in \{-1,1\}$. Moreover, each node $v \in V$ maintains a whiteboard variable $v.\port \in \{0,1,\dots,\delta_v\}$.
We use $\self.\level$ to maintain the distance between the current node $\vcur$ and the root of $T(\vcur)$ in $T(\vcur)$.
We use $\self.\error$ and $\self.\found$ to reset $\self.\phase$ to zero, as we will see later. Two variables $\self.\mode$ and $v.\port$ are just temporary variables:
$\self.\mode$ is used to remember the last movement is a forward move or not, 
and $v.\port$ is used to visit all neighbors of the current node in functions $\initialize()$ and $\expand()$.

The main routine is simply structured. 
In phase 0,
the agent executes function $\initialize()$ and sets $\self.\stage$ to $1$
(Lines 2-3).
In $\initialize()$, the agent simply initializes
$\self.\level$ and $v.\level$ for all $v \in N_1(\vcur)$
and makes the parent-child relationship
between $\vcur$ and each of its neighbors (Lines 13-21).
Thus, at the end of phase 0,
$T(\vcur)$ is the tree rooted at $\vcur$
such that every $u \in N(\vcur)$ is a child of $\vcur$
and no node outside $N_1(\vcur)$ is included in $T(\vcur)$.
In phase $i \ge 1$,
it executes function $\circulate()$
and increments $\self.\stage$ by one;
If $\self.\stage$ is already $5$ before incrementation,
$\self.\stage$ is reset to $1$ (Lines 4-6).
We will explain $\circulate()$ later.
Every time $\self.\stage$ is reset to $1$,
the agent begins the next phase,
that is, increments $\self.\phase$ by one modulo $k$ (Line 10).
 To implement the first rule (Rule 1 in Section \ref{sec:fast_reset}),
 the agent always memorizes in $\self.\found$ whether
 it added a new node to $T(\vcur)$ in the current phase. 
If $\self.\found$ is still $\fl$ at the end of a phase except for phase 0, the agent resets $\self.\phase$ to 0 and begins phase 0 (Line 8). In addition, the agent always substitutes $\tr$ for $\self.\error$
when it finds any inconsistency in $\circulate()$.
This variable $\self.\error$ is also used for implementing the second and third resetting rules (Rules 2 and 3 in Section \ref{sec:fast_reset}).
Once $\self.\error = \tr$ holds, the agent immediately halts $\circulate$ (Line 25)
and resets $\self.\phase$ to 0 and begins phase 0 (Line 8).


When the agent invokes $\circulate()$ at a node $r$ in phase $i$, it circulates through all nodes in $T(r)$ unless it detects any inconsistency, in which case it sets $\self.\error \gets \tr$. This function, $\circulate()$, simply implements the movement strategy specified in Section \ref{sec:expand}: the agent always moves to $N(v,\nextd(v))$, unless $\nextd(v) = \bot$ (Line 32), where $\nextd(v)$ is defined in Algorithm \ref{al:circulate}.
Only exception is when the agent visits a node $v$ with
$v.\level = \self.\phase$. Then, it simply goes back to the parent node
(after invoking $\expand()$ in Stages 2, 4, and 5)
(Line 28--30).
The agent makes a move inside the tree by invoking $\move()$,
in which $\self.\level$ is adequately updated (Line 35).
In Stage 1 (resp.~3), it always changes the color of the visited node
to white (resp.~black).
In Stages 2, 4, and 5, whenever the agent visits a node with
$\level = \self.\phase$, it tries to add the neighbors
of the current node according to the strategy mentioned in Section \ref{sec:expand},
by invoking $\expand()$ (Line 29).

In $\move()$, every kind of local inconsistency regarding
the parent-child relationship is detected.
Specifically, the agent detects an inconsistency at the migration from $u$ to $v$ if
(i) $\self.\level \neq v.\level$,
(ii) $\self.\nextchild \neq \bot$, but $\self.\nextchild \le \pin$, 
(iii) the agent has made a forward move from $u$ to $v$, but $\pin \neq v.\parent$,
(iv) the agent has made backtracking from $u$ to $v$, but $\pin = v.\parent$.
If an inconsistency is detected, the agent sets $\tr$ for $\self.\error$ (Line 42).
Moreover, this function implements the second rule (Rule 2 in Section \ref{sec:fast_reset}):
it sets $\self.\error \gets \tr$ if the agent visits a node with the level $\self.\phase$ that has one or more children in Stage 1.

Function $\expand()$ (Lines 43-61) is a simple implementation of the strategy explained in Sections \ref{sec:expand}
and the third rule (Rule 3 in Section \ref{sec:fast_reset}).
\end{document}